\def\beq{\begin{eqnarray}}
\def\eeq{\end{eqnarray}}
\def\la{\lambda}
\definecolor{NiColor}{RGB}{77,77,255}
\newcommand{\Fmuc}[1][]{\mathscr{F}_{\mu\textrm{c}_{#1}}}
\def\Fmuc{\mathcal{F}}
\newtheorem{theorem}{Theorem}[section]
\newtheorem{proposition}{Proposition}[section]
\theoremstyle{definition}
\begin{document}

%
%
 
\par 
\bigskip 
\LARGE 
\noindent 
{\bf Equilibrium states in Thermal Field Theory and in Algebraic Quantum Field Theory.} 
\bigskip 
\par 
\rm 
\normalsize 
 

\large
\noindent 
{\bf Jo\~ao Braga de G\'oes Vasconcellos$^{1,a}$}, {\bf Nicol\`o Drago$^{2,3,b}$}, {\bf Nicola Pinamonti$^{1,4,c}$},  \\
\par
\small
\noindent$^1$ Dipartimento di Matematica, Universit\`a di Genova - Via Dodecaneso, 35, I-16146 Genova, Italy. \smallskip

\noindent$^2$ Dipartimento di Matematica, Universit\`a di Trento - Via Sommarive, 14, I-38050 Povo (TN), Italy. \smallskip

\noindent$^3$ Istituto Nazionale di Fisica Nucleare - TIFPA via Sommarive, 14, I-38123 Povo (TN), Italy. \smallskip

\noindent$^4$ Istituto Nazionale di Fisica Nucleare - Sezione di Genova, Via Dodecaneso, 33 I-16146 Genova, Italy. \smallskip

\smallskip

\noindent E-mail: 
$^a$braga.vasconcellos@dima.unige.it, 
$^b$nicolo.drago@unipv.it, 
$^c$pinamont@dima.unige.it\\ 

\normalsize

\par 
 
\rm\normalsize 

\rm\normalsize 
 
 
\par 
\bigskip 
 
\rm\normalsize 
\noindent {\small Version of \today}

\par 
\bigskip

\rm\normalsize

\bigskip

\noindent 
\small 
{\bf Abstract}
In this paper we compare the construction of equilibrium states at finite temperature for self-interacting massive scalar quantum field theories on Minkowski spacetime proposed by Fredenhagen and Lindner in \cite{FredenhagenLindner} with results obtained in ordinary thermal field theory, by means of real time and Matsubara (or imaginary time) formalisms. 
In the construction of this state, even if the adiabatic limit is considered, the interaction Lagrangian is multiplied by a smooth time cut-off.
In this way the interaction starts adiabatically and the correlation functions are free from divergences. 
The corresponding interaction Hamiltonian is a local interacting field smeared over the interval of time where the chosen cut-off is not constant. We observe that, in order to cope with this smearing, the Matsubara propagator which is used to expand the relative partition function between the free and interacting equilibrium states, needs to be modified. 
We thus obtain an expansion of the correlation functions of the equilibrium state for the interacting field as a sum over certain type of graphs with mixed edges, some of them correspond to modified Matsubara propagators and others to propagators of the real time formalism.
An integration over the adiabatic time cut-off is present in every vertex.
However, at every order in perturbation theory, the final result does not depend on the particular form of the cut-off function.  
The obtained graphical expansion contains in it both the real time formalism and the Matsubara formalism as particular cases. 
For special interaction Lagrangians, the real time formalism is recovered in the limit where the adiabatic start of the interaction occurs at past infinity. At least formally, the combinatoric of the Matsubara formalism is obtained in the limit where the switch on is realized with an Heaviside step function and the field observables have no time dependance. 
Finally, we show that a particular factorisation which is used to derive the ordinary real time formalism holds only in special cases and we present a counterexample. We conclude with the analysis of certain 
correlation functions and we notice that corrections to the self-energy in a $\lambda\phi^4$ at finite temperature theory are expected.
\normalsize

\vskip .3cm

\section{Introduction}
Thermal field theory, seen as a combination of quantum field theory and statistical mechanics, has been founded more than fifty years ago with the pioneering works of Matsubara \cite{Mat55}, Keldysh \cite{Kel65} and Schwinger \cite{Sch61} (see also \cite{NS84}, \cite{vH86}).
At that time, one of the main motivations was the analysis of phase transitions of hadronic matter predicted to occur at high temperature in quantum cromodynamics \cite{McL85}. The formalism of thermal field theory has also been also applied to the analysis of baryogenesis in the early-universe (see e.g. \cite{Ka84, CH88, ProkopecA, ProkopecB}) and to derive transport equations for a system of quantum fields like in the study of the dynamics of quark-gluon plasma  (for a more recent application, see \cite{CM05} for instance). 

\bigskip
Extensive reviews of the methods of thermal field theory can be found in the book of Le Bellac \cite{LeBellac} or in the book of Kapusta and Gale \cite{Kapusta}. The connection of thermal field theory with rigorous quantum statistical mechanics can be found in the review of Landsman and van Weert \cite{Landsman}.
Here we would like to briefly recall the standard picture of thermal field theory which consists in the analysis of a 
self-interacting quantum theory at finite temperature. Self interaction is treated by means of perturbation theory, hence interacting fields are represented with free fields at finite temperature.
It is furthermore necessary to consider the modifications of the equilibrium state due to interaction. 
This is usually done in the following way.
Let us start with a free quantum field theory, initially prepared in a free equilibrium state at a given temperature. At a certain time $t_i$, the perturbative interaction is switched on, both the state and the dynamics are perturbed, and the system is left to evolve. 
In particular, the $n-$point Green functions for points located in the future of $t_i$
 are formally
\begin{equation}\label{eq:corr-path}
	G(x_1, \dots,  x_n) = \left. \frac{1}{Z(0)}\frac{\delta^n}{\delta j(x_1)\dots  \delta j(x_n)} \bigg< T_C \exp i \int_C dx \, \mathcal{L}_I (x) + j(x) \phi(x)\bigg> \right\vert_{j = 0},
\end{equation}
where brackets correspond to averaging with respect to the initial free equilibrium state at a given inverse temperature $\beta$ and $j$ is a complex-valued source. 
The time ordering $T_C$ and the integral of the interaction lagrangian $\mathcal{L}_I$ over the time coordinate $x^0$ is done along a contour $C$ in the complex plane.
The standard contour which is used is known as Keldish-Schwinger contour. This contour has usually three branches, it starts at $t_i$ goes up to $t_f$ then back to $t_i$  and the final branch goes from $t_i$ to $t_i - i\beta$. The first two branches take into account the perturbation of the fields in the interaction picture. It is thus necessary to chose a $t_f$ which is larger than the time coordinates of every point $x_i$.
The last branch is over imaginary time and it is necessary to modify the free equilibrium state into an interacting equilibrium state and thus it corresponds to the relative partition function constructed with the free and interacting dynamics.
	 
It is usually very difficult to derive a set of Feynman rules for the perturbative representation of correlation functions in \eqref{eq:corr-path}.
However, if one is interested in computing the relative partition function only, it is possible to discard the first two branches of the contour $C$ and consider an integration from $t_i$ directly to $t_i-i\beta$. 
This analysis is named {\bf Matsubara} or imaginary time formalism. The very same analysis holds if one is interested in computing correlation functions at imaginary times.  
Notice that $n$-point functions with all the time arguments along this vertical complex contour are correlation functions of an Euclidean field theory. Arguably, one of the greatest advantages of the Matsubara, or imaginary time formalism lies on the fact that the set of Feynman rules obtained with such analysis are quite similar to those of a field theory
in a vacuum state.
However, to obtain Green functions in Minkowski spacetime, it is necessary to perform a Wick rotation of Euclidean correlation functions, a task which is usually not trivial, see \cite{OS75a,OS75b} for details about the analytic continuation of correlation functions. 	

Another possibility starts from the observation that in some cases, in the limit where $t_i\to -\infty$ or equivalently where all points $x_j$ in \eqref{eq:corr-path} are rigidly translated to future infinity, the contribution of the last branch, from $t_i$ to $t_i-i\beta$ factorizes from the correlation functions. If this is the case the integration over the imaginary part of the contour can be discarded. The computational scheme obtained in this way is known as {\bf real-time} formalism.

\bigskip
As already said above, for a generic contour $C$ where all the branches are not discarded, it is not easy to obtain a set of Feynman rules for the perturbative representation of \eqref{eq:corr-path}. 
Furthermore, since the interaction is switched on instantaneously, some divergences in certain correlation functions are expected.
This problem could be overcome using an adiabatic or smooth switching on function. 
Unfortunately, in this way the interaction Hamiltonian is not local in time, hence the integration over the imaginary part of the path turns out to be problematic. Moreover, in the limit where $t_i\to -\infty$ the contribution of relative partition function, (the integration of the last branch from $t_i$ to $t_i-i\beta$) factorizes only in very special cases, see \cite{DFP-18} for further details. 
Finally, as we shall see in the loop computation of section \ref{sec:practical}, the real time formalism is very sensitive to the way in which the interaction is switched on. 

\bigskip
Recently, another approach to the perturbative treatment of quantum field theories at finite temperature has been developed within the framework of perturbative algebraic quantum field theory \cite{BF00, BDF09, BFV03, HW01, HW02, FredenhagenRejzner}. 
In this formalism, Fredenhagen and Lindner \cite{FredenhagenLindner} and Lindner in \cite{Lindner} have shown how it is possible to construct equilibrium states (KMS states at finite temperature) for an interacting quantum field theory in the adiabatic limit. 
In those works, the authors used in the realm of quantum field theories, ideas proper of rigorous quantum statistical mechanics as, for example, the construction of equilibrium states for a perturbed dynamics studied by Araki in \cite{Araki}.
To make this connection clear we recall that the starting point of the construction of equilibrium states in rigorous quantum statistical mechanics \cite{BR1,BR} is a $C^*-$dynamical system $(\mathfrak{A},\tau_t)$, where  $\mathfrak{A}$ is the $C^*-$algebra generated by the observables of the theory and  where $\tau_t$ is the one-parameter group of $*-$automorphisms which implements the time evolution. 
Equilibrium states at inverse temperature $\beta$ are the normalized positive functionals over $\mathfrak{A}$ which satisfy the Kubo Martin Schwinger (KMS) condition. We recall here that a state $\Omega$ satisfies the KMS condition at inverse temperature $\beta > 0$, also called $\beta-$KMS, with respect to the time evolution $\tau_t$, if, for every $A,B\in\mathfrak{A}$, $F(t):=\Omega(A\tau_t(B))$ is analytic in the domain $\text{Im}(t)\in (0,\beta)$, bounded on its closure, and such that $F(t+i\beta)=\Omega(\tau_t(B)A)$. 

KMS states enjoy another important property, in particular they are stable under perturbation of the dynamics.
We consider for simplicity the case where 
$\mathfrak{A}$ is represented as a set of bounded operator over some Hilbert space $\mathcal{H}$ and, in this representation,  $\tau_t$ is generated by an Hamiltonin $H$ which is a (possibly unbounded) selfadjoint operator over $\mathcal{H}$. Let $H_I$ be a selfadjoint element of $\mathfrak{A}$, then the dynamics perturbed by $H_I$ is the one generated by $H+H_I$. We denote the corresponding one-parameter group of $*-$automorphisms by $\tau_t^{H_I}$. 
A given $\beta-$KMS state $\Omega$ with respect to $\tau_t$ is not a $\beta-$KMS state with respect to $\tau_t^{H_I}$. However, if a certain clustering condition is satisfied, see e.g. \cite{BKR} for further details, the following stability result, also called return to equilibrium, holds
\begin{align}\label{eq: Cstar return-to-equilibrium}
\lim_{t\to\infty} \Omega(\tau_t^{H_I}(A)) = \Omega^{H_I}(A)\,, \qquad A\in\mathfrak{A},
\end{align}
where $\Omega^{H_I}$ is a $\beta-$KMS with respect to $\tau_t^{H_I}$ \cite{BKR, BR, Araki}. Furthermore, 
\begin{equation}\label{eq:intstate} 
\Omega^{H_I}(A) = \frac{\Omega(AU(i\beta))}{\Omega(U(i\beta))}, \qquad A\in\mathfrak{A},
\end{equation}
where $U(t)$ is the cocycle intertwining the free and perturbed time evolution and where 
the function $t\mapsto \Omega(AU(t))$ is evaluated at $i\beta$.
Actually as shown by Araki in \cite{Araki}, this function can be continued to a continuous function in the strip $\text{Im}(t)\in[0,\beta]$ which is analytic in the interior.
Hence, if $\Omega$ has the desired clustering property, in view of 
\eqref{eq: Cstar return-to-equilibrium} we might construct equilibrium states for the perturbed dynamics either taking the large time limit of $\Omega\circ \tau_t^{H_I}$ or with \eqref{eq:intstate}.
Notably, if the stability does not hold, \eqref{eq:intstate} still furnishes a well-defined equilibrium state.
We observe that if $e^{-\beta H}$ is a trace class operator over a suitable Hilbert space the KMS state reduces to the ordinary Gibbs state. In that case, formula \eqref{eq:intstate} for the perturbed state gives for $A\in\mathfrak{A}$
\[
\Omega(A) = \frac{\text{Tr}(A e^{-\beta H} )}{\text{Tr}(e^{-\beta H} )}, \qquad
\Omega^{H_I}(A) =  \frac{\Omega(A\rho)}{\Omega(\rho)}, \qquad \rho = e^{-\beta(H+{H_I})}e^{\beta H}.
\]
We thus conclude that $U(i\beta)$ in \eqref{eq:intstate} plays the role of the relative partition function $\rho$ also when $e^{-\beta H}$ is not a trace class operator.

Fredenhagen and Lindner in \cite{FredenhagenLindner} managed to extend the Araki formalism, and in particular \eqref{eq:intstate}, to the case of a self-interacting massive scalar field $\phi$ propagating over a Minkowski spacetime where the interaction is treated perturbatively and where the adiabatic limit is considered.
In particular, they noticed that even if the Hamiltonian is not at disposal, it is still possible to obtain in the field algebra the generator $K$ of the cocycle intertwining the free and the interaction time evolution, at least when the interaction Lagrangian  $\chi h \mathcal{L}_I$ is made of compact support with a time cut-off $\chi$ and a space cut-off $h$. That generator plays the role of the perturbation Hamiltonian and furthermore
The Green functions of the interacting theory can be obtained  
\begin{equation}\label{eq:state-fl}
G(x_1,\dots, x_n)=
\sum_{l\geq 0} (-1)^l\!\!\!\int\limits_{0<u_1<\dots <u_l<\beta}\!\!\!\!\!\! du_1 \dots du_l\; \omega^{\beta,c}(R_V(T(\phi(x_1),\dots,\phi(x_n)))\otimes K_{iu_1}\otimes\dots \otimes K_{iu_l} )\, ,
\end{equation}
where $\omega^{\beta,c}$ are the connected functions of $\omega^\beta$ which is the equilibrium state at inverse temperature $\beta$ of the free theory over which perturbation theory is developed; $R_V$ is the Bogoliubov map, see e.g. \eqref{eq:bogoliubov}, $K=R_V(\int \dot\chi h \mathcal{L}_I)$ is the interacting Hamiltonian  and
$K_{iu}$ its imaginary  time translation.  Furthermore, in \cite{FredenhagenLindner} it is shown that the limit $h\to 1$ where the spatial cutoff is removed can be taken. The obtained state, denoted by $\omega^{\beta,V}$, 
see also \eqref{eq:KMS-state} and \eqref{Eqn: cluster expansion of interacting KMS state}, does not depend on $\chi$ anymore and it satisfies the KMS condition with respect to the interacting time evolution. Further details about this construction will be given in section \ref{sec:paqft}.
Additional properties of the state constructed in \cite{FredenhagenLindner} have been discussed in \cite{DHP,DFP-18,DFP-19}.
Notice that, in general, the state $\omega^{\beta,V}$ satisfies the KMS condition with respect to the interacting time evolution $\alpha_t^V$ only if the correction to $\omega^{\beta}$ described by the elements of the sum present in \eqref{eq:state-fl} for $l>1$ are taken into account.

We stress that it is usually necessary to consider the state $\omega^{\beta,V}$ (analogous to \eqref{eq:intstate}), rather than considering the large time limit of $\omega^\beta\circ\alpha^V_t$. Actually, in  \cite{DFP-18}, it is shown that for large times $t\to+\infty$ the two approaches coincides -- namely $\omega^\beta\circ\alpha^V_t\to\omega^{\beta,V}$ -- only in special cases.
In particular, in a $g\phi^4$ theory or in a $g\phi^2$ theory, they do not coincide because, in that case, for generic observables $A$ and $B$ 
\[
\lim_{t\to\infty} \lim_{h\to1} \omega^{\beta,c}(\alpha_t^V(A) \otimes B) \neq 0
\]
namely the clustering condition used to prove \eqref{eq: Cstar return-to-equilibrium} is not satisfied.
We furthermore notice that, only when this clustering condition holds, we can construct an equilibrium state with the large time limit of $\omega^{\beta}\circ \alpha_t^V$. This correspond to discard 
 the integration over the imaginary branch of the Keldish-Schwinger contour and moving $t_i$ to minus infinity. 
 
To support this observation, 
consider the quasifree KMS state with respect to a fixed Minkowski time at inverse temperature $\beta$ of a free quantum scalar field theory of mass $m$. Its two-point function is translation invariance and it is equal to 
\begin{equation}\label{eq:2pt-thermal-m}
\Delta^{\beta,m}_+(x) =  \frac{1}{(2\pi)^3}\int d^3\mathbf{p} e^{i \mathbf{p} \cdot \mathbf{x} }  \left( \frac{1}{1-e^{-\beta w_m }} \frac{e^{-i w_{m} x^0 }}{2w_{m} }   +  \frac{1}{e^{\beta w_m}-1}  \frac{e^{i w_{m} x^0 }}{{2w_{m} }}   \right)
\end{equation}
where $w_m = \sqrt{|\mathbf{p}|^2+m^2}$.
 Notice that, thanks to the validity of the principle of perturbative agreement \cite{HW05, DHP},
 we can treat changes of the mass $m$ to $\tilde{m}=\sqrt{m^2+\lambda}$ in two ways. Either exactly substituting the mass 
$m$ with $\tilde{m}$ in the various propagators appearing in the theory or with perturbation theory, namely using an interaction Lagrangian of the form $\lambda\phi^2$. 
We observe that, \cite{DHP, DappiaggiDrago,  Drago} 
 perturbation theory alone is able to change the mass in the modes $\frac{e^{-i \omega_{m} x^0 }} {\sqrt{2\omega_{m}}}$ appearing in \eqref{eq:2pt-thermal-m} but it is not able to change the mass in the Bose factors $\pm(e^{\pm \beta w_m}-1)^{-1}$.

\bigskip

In this paper we would like to compare the traditional methods used in thermal field theory with the construction proposed by Fredenhagen and Lindner. We shall see that an expansion in terms of graph of the correlation functions \eqref{eq:state-fl} in the state studied by Fredenhagen and Lindner can be obtained, see e.g. section \ref{sec:graph-exp}.
In particular, we show in Proposition \ref{pr:real-time-formalism} that the Bogoliubov map $R_V$ present in \eqref{eq:state-fl} can be expanded in graphs whose vertices correspond to the propagators given in \eqref{eq:real-time-propagator}. These graphs are very similar to those appearing in the real time formalism.
To obtain a more explicit graphical description of the connected correlation functions present in the elements of the sum in \eqref{eq:state-fl} for $l>1$, we study the analytic continuation of the thermal propagator in the imaginary domain and its expansion in terms of Matsubara frequencies, see section \ref{sec:propagator}, but keeping the time with a non vanishing real part. This is necessary to properly treat the adiabatic smearing in time of the interaction Lagrangian. The obtained propagator is a modified version of the Matsubara one -- \textit{cf.} equation \eqref{eq:DeltaHat}.
Thanks to the symmetry satisfied by this propagator, we show in Proposition \ref{pr:Fsymmetric} and in Theorem 
\ref{th:matsubara-frequency-conservation} that the integration in the $u-$variables in \eqref{eq:state-fl} can be taken ensuring conservation of the sum of Matsubara frequencies at every interaction Hamiltonian $K$. The combinatoric of the expansion of the correlation function is similar to that of the imaginary time formalism.

We summarize the computational rules for the obtained graphical expansion in section \ref{sec:graph-exp}.
These rules combine both the real time formalism used to evaluate the action of the interactions on local fields (section \ref{sec:graph-rt}) and the Matsubara formalism used to evaluate the modification of the state, necessary to have an equilibrium state for the interacting theory (section \ref{sec:graph-matsubara}).
If return to equilibrium holds (analogous to \eqref{eq: Cstar return-to-equilibrium} see also \eqref{eq:return-to-equilibrium}), the evaluation of the correlation functions in the state studied by Fredenhagen and Linder reduces to the analysis done in the ordinary real time formalism.
Unfortunately, we are able to prove return to equilibrium only if a certain clustering condition holds (see \eqref{eq:clustering-condition}) and this is typically not the case when the adiabatic limit is considered, that is, when the spatial support of the interaction Lagrangian is non-compact, which is actually the standard case in Quantum Field Theory \cite{DFP-18}.
In this limit, the state studied by Fredenhagen and Linder still furnishes meaningful results while the real time formalism in some cases needs to be amended (see the discussion present at the end of section \ref{sec:correction-phi2}).
Moreover, as we shall see in section \ref{sec:practical}, perturbative computations at first loop order done in the real time formalism depend very much on the time in which the interaction is switched on.
In particular, at the end of section \ref{sec:correction-phi2} we argue that additional contributions to the self-energy arise if a theory with $g\phi^4$-interaction is considered.
Indeed, due to the presence of the thermal mass, both the real time formalism and the imaginary time formalism need to be combined in order to provide the complete result.
The form of the self-energy up to two loop order in  
real time formalism has been obtained by \cite{Altherr} in the analysis of infrared problems present in the limit of vanishing mass. Similar computations in the imaginary time formalism can be found in \cite{Parwani}.

\bigskip
The paper is organised as follows.
In section \ref{sec:paqft} we briefly introduce the formalism of algebraic quantum field theory in perturbation theory and we recall further details about the construction of equilibrium state for the interacting theory.
In section \ref{Sec: Expansion of the perturbative series} we present the main theorems which permit to obtain the graphical expansion of the correlation functions of the interacting equilibrium states given by Fredenhagen and Lindner.
Moreover, we discuss how this graphical expansion contains both the real time and the Matsubara formalisms as sub-cases.
Section \ref{sec:practical} contains some examples of the evaluation of correlation functions at fixed finite perturbation order.
These examples are chosen in order to highlight  the differences with the ordinary formalisms with particular attention to the discussion of corrections to the self-energy.

\section{Introduction to pAQFT and equilibrium states}\label{sec:paqft}

Fredenhagen and Lindner in \cite{FredenhagenLindner}  have extended the construction of equilibrium states for perturbed dynamics to the case of self-interacting quantum field theories. This has been achieved using a formula similar to \eqref{eq:intstate} adapted to perturbation theory.
In order to recall this result we briefly introduce the AQFT framework, which we will exploit henceforth -- \textit{cf}. \cite{BDF09,HW01, HW02, FredenhagenRejzner}.

We consider an interacting real scalar field theory
propagating over a four dimensional Minkowski spacetime $M$ equipped with a metric $\eta$ whose signature is $(-,+,+,+)$.
The Lagrangian of the theory we want to consider is of the form
\[
\mathcal{L} =\mathcal{L}_0 - \mathcal{L}_I =    -\frac{1}{2}\partial_\mu \phi\partial^\mu \phi - \frac{m^2}{2} \phi^2 - \mathcal{L}_I,
\]
where $m$ is a non vanishing mass and $\mathcal{L}_I$ is the interaction Lagriangian.
 In particular, for $\mathcal{L}_I=0$ the field is a free massive scalar field theory. 
The equation of motion associated to  $\mathcal{L}$ is of the form
\[
P\phi - V' = \Box \phi - m^2 \phi - V'=0
\]
where $\Box$ is the d'Alembert operator of the considered Minkowski spacetime. 
$V'$ is the first functional derivative of the difference of the interacting and free local action. An example of the considered $V$ is the following   
\[
V=\int_M \mathcal{L}_I d\mu_x =   \lambda \int_M \frac{\phi^n(x)}{n} d\mu_x, \qquad V'(x) = \lambda \phi^{n-1}(x). 
\]

The observables of the corresponding quantum theory are obtained as elements of the algebra $\mathcal{A}$ generated by all possible normal ordered abstract Wick polynomials $\mathcal{W}:=\{\Phi^n(f),\partial^\alpha\Phi^\beta(f)\dots \}$, namely all possible local fields smeared with smooth compactly supported test functions.  The product among these objects has to encode Einstein causality, furthermore, the commutation relations obtained with this product at linear order in $\hbar$ for $\hbar$ which tends to $0$ must reduce to $i$ times the Peierls brackets of the corresponding classical theory \cite{FredenhagenRejzner}.

A direct construction of this product -- and thus of the corresponding algebra  $\mathcal{A}_{\text{free}}$ -- is available for the case of linear theories ($\mathcal{L}_I=0$). Interacting fields are then constructed as formal power series with coefficients in $\mathcal{A}_{\text{free}}$, with the caveat that elements of $\mathcal{A}_{\text{free}}$ are considered to be off-shell.

A concrete representation of $\mathcal{A}_{\text{free}}$ can be given in terms of functionals over {\bf off-shell smooth field configurations} $\phi\in \mathcal{E}:=C^\infty(M)$.
Hence elements of $\mathcal{A}_{\text{free}}$ are seen as elements of the set of {\bf microcausal functionals} over off-shell smooth field configurations 
\[
\Fmuc = \left\{\left.F:\mathcal{E}\to\mathbb{C}\right|\; F\; \textrm{compactly supported},\; \text{smooth},\;  \text{microcausal} \right\}\,.
\]
Here a functional  $F$ is said to be smooth and compactly supported if all its functional derivatives $F^{(n)}$ exist as compactly supported distributions -- that is, $F^{(n)}\in\mathcal{E}'(M)$ for all $n\in\mathbb{N}$.
The compactness of $\operatorname{supp}(F)$ implies that it exists a compact subset $K\subset M$ such that $\text{supp}(F^{(n)})\subset K^n$ for all $n\in\mathbb{N}$.
Finally, $F$ is said to be microcausal if the wave front set \cite{Hormander} of its functional derivatives is such that $\text{WF}(F^{(n)}) \cap (\overline{V^+}^n\cup \overline{V^{-}}^n)=\emptyset$, where $\overline{V^+}^n$ is the subset of $T^*M^n$ formed by elements $(x_1,\dots ,x_n;k_1,\dots, k_n)$  where $k_i$ are future-directed causal covectors and similarly $\overline{V^-}^n$ is formed by past-directed covectors.
This rather technical condition will be necessary to ensure the well definiteness of the products we will introduce below \cite{HW01} -- \textit{cf.} equation \eqref{eq:star-product-abstract}.

Relevant subsets of $\Fmuc$ are the set of {\bf regular functionals} $\mathcal{F}_\text{reg}$ formed by functionals whose functional derivatives are smooth compactly supported functions. The other relevant subset is the set of {\bf local functionals}, namely 
\[
\mathcal{F}_{\text{loc}}= \left\{F \in \Fmuc \left|\; \text{supp}F^{(n)}\subset D_n\subset M^n \right.\right\}
\] 
where $D_n$ is the diagonal of  $M^n$, namely the image of $M$ under the map $\iota:M\to M^n$, $\iota(x) := (x,x, \dots , x)$. 
In the case of free theories the product in $\Fmuc$ is such that
\begin{equation}\label{eq:star-product-abstract}
	A\star_w B = \sum_{n\geq 0} \frac{\hbar^n}{n!} \left\langle A^{(n)},(w)^{\otimes n}  B^{(n)} \right\rangle_n
\end{equation}
where $w$ is any Hadamard two-point function associated to the free massive Klein-Gordon equation $P\phi=0$ where $P=\Box-m^2$.
By definition this means that $w \in \mathcal{D}'(M^2)$ satisfies the following properties: (a) $w\circ(1\otimes P)$ and $w\circ (P\otimes 1)$ are smooth functions; (b) for every $f,g\in\mathcal{D}(M)$, $\overline{w(f,g)} = w(\overline{g},\overline{f})$ and $w(f,g)-w(g,f) = i \Delta(f,g)$ where $\Delta=\Delta_R-\Delta_A$ is the causal propagator of the free theory, namely the retarded minus advanced fundamental solutions of the massive Klein-Gordon equation $P\phi=0$; (c) $w$ satisfies the {\bf microlocal spectrum condition} \cite{Radzikowski, BFK}
\[
\text{WF}(w)=\left\{ (x_1,x_2;k_1,k_2)\in T^*M\setminus\lbrace 0\rbrace |\; (x_1,k_1)\sim (x_2,-k_2), k_1\triangleright 0   \right\}
\]
where $(x_1,k_1)\sim (x_2,k_2)$ holds true if $x_1$ and $x_2$ are joined by a null geodesic $\gamma$, $\eta^{-1}k_1$ is a tangent vector of $\gamma$ at $x_1$ and if $k_2$ is the parallel transport of $k_1$ along $\gamma$.
Finally $k_1\triangleright 0$ holds true if $k_1$ is future directed. 

If we equip $\Fmuc$ with $\star_w$ given in \eqref{eq:star-product-abstract} and with the involution formed by complex conjugation we obtain a representation of the $*-$algebra of the free theory. Up to renormalization freedom, see e.g. \cite{HW01}, this representation is such that
\begin{equation}\label{eq:map-monomials}
	r_w:\mathcal{A}_{\text{free}}\to (\Fmuc,\star_w)\qquad
	r_w (\Phi^n(f)) := \zeta_{w-H} \int f \phi^n d\mu\,.
\end{equation}
Here $H$ is the Hadamard function constructed only with local information.
In the case of a massless free theory over a Minkowski spacetime, $H$ is the two-point function of the vacuum, while for massive free theories an explicit construction can be found in \cite[App. A]{BDF09}.
Furthermore, 
\[
\zeta_{w} F := \exp\left(\frac{1}{2}\int d\mu_xd\mu_y w(x,y)\frac{\delta^2}{\delta \phi(x)\delta \phi(y)}\right) F\,.
\]
Different choices of $w$ produce isomorphic algebras via $r_{w'}\circ r_w^{-1}:=\zeta_{w'-w}\colon(\mathcal{F},\star_w)\to (\mathcal{F},\star_{w'})$ \cite{BDF09}.
However, notice that local non linear functionals are not invariant under this isomorphisms, while for local linear fields we have $r_w (\Phi(f)) = \int f \phi d\mu$ no matter the choice of $w$.
With a little abuse of notation we shall thus often write $r_w (\Phi(f))=\Phi(f)$.
Local functionals in $(\Fmuc,\star_w)$ are understood as being normal ordered fields with respect of $w$. 

\bigskip
To construct interacting fields within perturbation theory we use the Bogoliubov construction.
The starting point is the definition of time-ordered products among local fields \cite{HW02,BF00}.
Once a representation $(\Fmuc,\star)$ is chosen, the time-ordered product is defined as a sequence of maps $T\colon\mathcal{F}_{\mathrm{loc}}^{\otimes n}\to\Fmuc$ from multilocal functionals $\mathcal{F}_{\text{loc}}^{\otimes n}$ to $\Fmuc$ for all $n\in\mathbb{N}$. 
The time-ordered products are totally symmetric and among regular local functionals  are completely characterised by the {\bf causal factorisation property} which says that 
\[
T(F_1,\dots, F_l,G_1,\dots, G_p) = T(F_1,\dots, F_l)\star T(G_1,\dots, G_p), \qquad F_i \gtrsim G_j, \qquad \forall i,j
\]
where $F \gtrsim G$ holds if $J^{+}(\text{supp} F)\cap J^{-}(\text{supp} G) =\emptyset$. 
As shown by Epstein and Glaser in \cite{EG}, time-ordered products among generic local functionals can be constructed employing a recursive construction over the number of factors, see also
\cite{BF00, HW02}. At each recursive step the causal factorisation property permits to define the distributional kernel associated to the multilocal maps $T(F_1, \dots, F_n)$ up to the thin diagonal $D_n\subset M^n$, namely as elements of $\mathcal{D}'(M^n\setminus D_n)$. The extension of these distributions over  $D_n$ can be obtained using scaling limit techniques similar to those used to extend homogeneous distributions \cite{Hormander}. The replacement for the degree of homogeneity is the Steinman scaling degree \cite{Steinmann}. Though the extension is not unique, the freedom in the extension can be restricted (but not completely resolved) with some further requirements \cite{HW01} -- the remaining freedom is the well-known renormalization freedom. 
As discussed in \cite{DFKR}, this method is equivalent to the more standard dimensional regularization.
For our purposes it is enough to know that these maps exist and that the freedom can be controlled.

Once the time-ordered products are constructed we have all the ingredient to built the formal $S-$matrix of a local functional $V$.
This is nothing but the time-ordered exponential of $V$
\[
S(V) =  1+\sum_{n\geq 1}\frac{(-i)^n}{n!}T(\underbrace{V,\dots, V}_n).
\]
Unfortunately the series written above is usually not convergent, hence, $S(\lambda V)$ can be understood as a formal power series in $\lambda$ with coefficients in $\Fmuc$ only, namely as an element of $\Fmuc[[\lambda]]$.

The time-ordered exponential $S(V)$ has a lot of interesting properties which hold in the sense of formal power series. In particular, $S(V)$ is an unitary element of $\Fmuc[[\lambda]]$. With the time-ordered exponential at disposal we may give meaning to the {\bf Bogolibov map}
\begin{equation}\label{eq:bogoliubov}
R_V(A) = S(V)^{-1} \star T(S(V),A), \qquad 
\end{equation}
which is well-defined for all $A\in\bigoplus_{n\geq 0}T\mathcal{F}_{\text{loc}}^{\otimes n}$.

With the Bogoliubov map at disposal we may represent interacting local fields on $\Fmuc[[\lambda]]$. Actually, the Bogoliuobv map applied to a local linear field $\Phi(f)$ satisfies an off-shell version the interacting equation of motion $R_V(\Phi(Pf)) - R_V(V^{(1)}(f))  = \Phi(Pf)$.
See e.g. \cite{BDF09, FredenhagenRejzner} for further details.
The interacting field algebra $\mathcal{F}_I$ is then represented in $\Fmuc[[\lambda]]$ as the algebra generated by $R_V(\mathcal{F}_{\text{loc}})$. The causal factorisation property implies that 
if $F,G\in\mathcal{F}_{\text{loc}}$ with $F\gtrsim G$
\[
R_V(T(F,G)) = R_V(F)\star R_V(G).
\]  
Once a quantum state $\omega$ is chosen, the {\bf generating functional} of the correlation functions in that state is given in terms of the relative $S-$matrix $S_V(J) = S(V)^{-1} \star S(V+J)$
as
\[
\mathcal{Z}(j):= \omega(S_V(J_j)), \qquad J_j:=\int j \phi d\mu 
\]
and thus the correlation functions can be obtained as
\[
G(x_1,\dots, x_n)= \frac{1}{(-i)^n} \left.\frac{\delta^n}{\delta j(x_1)\dots \delta j(x_n)}   \mathcal{Z}(j)\right|_{j=0}.
\]
The state $\omega$ has to be chosen carefully in order to obtain correlation functions analogous to \eqref{eq:corr-path}.

Finally we observe that if the state on the background is quasi-free and its correlation functions are generated by the two-point function $\omega_2$, it is convenient to analyse the expectation values in the representation $(\Fmuc,\star_{\omega_2})$, actually, in that representation
\[
\omega(r_{\omega_2}^{-1}(A)) = \left. A \right|_{\phi=0} = A(0), \qquad  A\in (\Fmuc,\star_{\omega_2}).
\]

\subsection{Time translations and equilibrium states}

Consider a local field $F$, which can be written as
\[
F = \int_M f(x) \Psi(x) d\mu_x,
\]
where $\Psi$ is a section of the jet-bundle of $\phi$ \cite{Kolar-Michor-Slocvak-93}.
The free {\bf time evolution} on a local functional is defined by
\[
\alpha_t(F) =F_t = \int_M f(x-t e_0) \Psi(x) d\mu_x\,,
\]
where $e_0$ is the unit time vector field in the chosen frame.
Since local functionals generate $\Fmuc$, we have that $\alpha_t$ can be extended to $\Fmuc$.
Moreover, if we equip $\Fmuc$ with a $\star_w$ constructed with a translation invariant $w$, $\alpha_t$ is a one parameter group of $*-$automorphisms of $\Fmuc$. 

The {\bf interacting time evolution} is defined by
\[
\alpha_t^V(R_V(F)) = R_V(\alpha_t(F)) = R_V(F_t).
\]
Within this setting we may discuss the same results presented in the context of rigorous quantum statistical mechanics -- \textit{cf.} equations \eqref{eq:intstate} and \eqref{eq: Cstar return-to-equilibrium}.
To construct an equilibrium state for the interacting theory, we need to find the cocycle $U(t)$ which intertwines the free and interacting time evolution,  $\alpha_t^V(\cdot)=U(t)\star \alpha_t^V(\cdot)\star U(t)^{*}$.
Even if the generators of both $\alpha_t$ and $\alpha_t^V$ cannot be obtained as elements of $\Fmuc$, the desired cocycle can be written as a formal power series with coefficients in $\Fmuc$ whenever $V$ is of compact support.
This cocycle has been extensively studied in \cite{FredenhagenLindner}.
In particular,
by the time slice axiom \cite{CF}
a state is characterized by its values on observables supported in an $\epsilon-$neighborhood $\Sigma_\epsilon$ of the $t=0$ Cauchy surface -- specifically, $\Sigma_\epsilon=\{p\in M |\; t(p)<\epsilon \}$.
Therefore, the cocycle needs to be constructed only for observables supported in $\Sigma_\epsilon$. 
In addition, for observables supported in $\Sigma_\epsilon$, using the causal factorisation property, we might restrict the support of the interaction Lagrangian to $\Sigma_{2\epsilon}$. 
Hence, we fix
\begin{equation}\label{eq:Vchih}
V_{\chi h} = \int   \mathcal{L}_I(\tau,\mathbf{x}) \chi(\tau) h(\mathbf{x}) d\tau d^3\mathbf{x}
\end{equation}
where $h\in C^\infty_0(\Sigma)$ is a spatial cut-off which will be eventually removed when the adiabatic limit is considered and $\chi$ is a smooth function of time equal to $1$ on $J^+(\Sigma_\epsilon)\setminus\Sigma_\epsilon$, and equal to $0$ for times smaller than $-2\epsilon$.
Now for $t>0$ and for observables supported in $\Sigma_\epsilon$ the cocycle is 
\[
U(t) = S_V(V_t-V).
\]
With this cocycle at disposal, we can use a formula similar to \eqref{eq:intstate} to write the KMS state for the interacting theory once the equilibrium state $\omega_\beta$ of the free theory with respect to the free time evolution at inverse temperature $\beta$ is given. 
Actually, in \cite{FredenhagenLindner} it has been shown that for every $A\in\mathcal{F}$ the map $t\mapsto\frac{\omega^{\beta}(A\star U(t))}{\omega^{\beta}(U(t))}$ can be continued to a continuous function the strip $\operatorname{Im}(t)\in[0,\beta]$ which is analytic in the interior so that the evaluation
	\begin{equation}\label{eq:KMS-state}
		\omega^{\beta,V}(A) := \frac{\omega^{\beta}(A\star U(i\beta))}{\omega^{\beta}(U(i\beta))}\,,
	\end{equation}
defines a KMS state for the interacting time evolution at inverse temperature $\beta$.
The state $\omega^{\beta,V}$ can actually be expanded as follows
\begin{align}\label{Eqn: cluster expansion of interacting KMS state}
	\omega^{\beta,V}(A)=
	\sum_{n\geq 0} (-1)^n\int_{\beta S_n} du_1 \dots du_n \omega^{\beta,c}(A\otimes \alpha_{iu_1}K\otimes\dots \otimes \alpha_{iu_n}K )\, ,
\end{align} 
where the domain of integration is the symplex $\beta S_n = \{(u_1,\dots ,u_n)| 0<u_1<\dots <u_n <\beta\}$, while
$\omega^{\beta,c}$ denotes the connected correlation function associated with the state $\omega^\beta$.
$K$ is the generator of the cocycle $U(t)$ and it is actually $K=R_V(\dot{V})$ where $\dot{V}= \int \dot\chi h \mathcal{L}_I$ -- \textit{cf.} \cite{FredenhagenLindner}. 

If both $h$ and $\chi$ are of compact support, $V=V_{\chi,h}$ given in \eqref{eq:Vchih} is an element of $\mathcal{F}_{\text{loc}}$. However, in \cite{FredenhagenLindner} it has been shown that the {\bf adiabatic limit} can be taken, namely the limit $h\to1$ of the obtained correlation functions exists and it is finite.
This holds thank to the fast (exponential) decay of the free thermal connected $n$-point functions $\omega^{\beta,c}$ for large spatial separation.
Furthermore, the expectation values obtained in that limit do not depend on the form of the function $\chi$ used to smoothly switch on the interaction -- \textit{cf.} \cite{FredenhagenLindner,DHP}.
In this way an equilibrium state at inverse temperature $\beta$ for a self-interacting scalar field is obtained.  

\section{Expansion of the perturbative series for $\omega^{\beta, V}$ in terms of graphs}\label{Sec: Expansion of the perturbative series}

We shall now discuss a convenient graphical expansion of the $\omega^{\beta,V}$.
To this avail we first present an expansion of the two-point function $\Delta_+^\beta$ -- \textit{cf.} equation \eqref{eq:thermal-2pt} -- of the equilibrium state $\omega^\beta$ in terms of the Matsubara frequencies -- \textit{cf.} equation \ref{eq:thermal-propagator}.
We then proceed in the analysis of the graphical expansion of $\omega^{\beta,V}$ in two steps.
At first we consider the graphical expansion of the Bogolibov map \eqref{eq:bogoliubov}, discussing its relation with the real time formalism.
Afterwards we discuss the graphical expansion of the state  $\omega^{\beta,V}$ and its relation with the combinatoric of the imaginary time -- \textit{cf.} Theorem \ref{th:matsubara-frequency-conservation}. In section \ref{sec:graph-exp} these results will be used to obtain computation rules for the evaluation of $\omega^{\beta,V}(A)$.

\subsection{Thermal Propagators}\label{sec:propagator}

In this subsection we discuss the form of the various propagators appearing in the expansion of $\omega^{\beta,V}$. An extensive study of the analyticity properties of the various propagator presented in this section can be found in \cite{Fulling}. We start recalling the form of the two-point function for a quantum free massive scalar field propagating over a Minkowski spacetime. In particular, 
we recall that the {\bf two-point function}  of the KMS state with respect to the free time evolution at inverse temperature $\beta$ already seen in \eqref{eq:2pt-thermal-m} is
\begin{align}
\Delta^\beta_+(x) &= \frac{1}{(2\pi)^3}\int d^4p e^{ipx}  \frac{1}{1-e^{-\beta p_0}}   \delta(p^2+m^2) \text{sign}(p_0) 
\notag
\\
 &= \frac{1}{(2\pi)^3}\int d^3\mathbf{p} e^{i\mathbf{p}\mathbf{x}}  \frac{1}{1-e^{-\beta w }}  \frac{1}{2w}  \left( e^{-iw x^0}+ e^{-\beta w } e^{iw x^0}  \right)  
\label{eq:thermal-2pt}
\end{align}
where $\text{sign}(p_0)$ is the sign function and $w=\sqrt{|\mathbf{p}|^2+m^2}$. Furthermore, 
$\Delta^\beta_-(x) = \Delta^\beta_+(-x)$ and thus the {\bf commutator function} is
\[
\Delta(x) = -i(\Delta^\beta_+(x)-\Delta^\beta_-(x)).
\]
The associated {\bf Feynman propagator} is the time-ordered version of the two-point function 
\begin{align}\label{Eqn: Feynman propagator}
	\Delta^\beta_F(x) = \Theta(x^0)\Delta^\beta_+(x) + \Theta(-x^0)\Delta^\beta_-(x)
\end{align}
and the {\bf anti Feynman propagator} is the anti time-ordered two-point function
\begin{align}\label{Eqn: anti-Feymman propagator}
	\overline\Delta^\beta_F(x) = \Theta(-x^0)\Delta^\beta_+(x) + \Theta(x^0)\Delta^\beta_-(x).
\end{align}
The two-point function $\Delta^\beta_+(x)$ can be continued analytically in the imaginary times $x^0+iu$ for $u\in(-\beta,0)$: in particular we have
\[ 
\Delta_+^{\beta}(x^0+iu,\mathbf{x}) 
= \frac{1}{(2\pi)^3}\int d^3\mathbf{p} e^{i\mathbf{p}\mathbf{x}}  \frac{1}{1-e^{-\beta \omega }}  \frac{1}{2\omega}  \left( e^{u\omega}e^{ -i x^0\omega}+ e^{-\beta\omega -u\omega} e^{+ix^0\omega }  \right)\,.
\]
Following \cite{Fulling} it holds that 
$\Delta^\beta_+(z,\mathbf{x})$ can be further analitically continued to the {\bf thermal propagator}
$\Delta^\beta(z,\mathbf{x})$ which is analytic on the domain
\[
\{(z,\mathbf{x})\in\mathbb{C}\times\mathbb{R}^3|\; \text{Im}(z) \not\in \beta\mathbb{Z} \lor |\text{Re}(z)|\geq |\mathbf{x}|  \}\,.
\] 
Hence, the thermal two-point function equals the thermal propagator $\Delta^\beta(z,\mathbf{x})=\Delta_+^\beta(z, \mathbf{x})$ for $\text{Im}{(z)}\in(-\beta,0)$ and furthermore, $\Delta^\beta(z,\mathbf{x})=\Delta_-^\beta(z, \mathbf{x})$ for $\text{Im}{(z)}\in(0,\beta)$. 
We also observe that $\Delta^\beta(z,\mathbf{x})$ is periodic in the imaginary time, actually, for every $N\in\mathbb{Z}$ 
\[
\Delta^\beta(z+i N \beta,\mathbf{x}) = \Delta^\beta(z ,\mathbf{x})
\]
and, on its domain of analiticity, it is symmetric under reflection of the complex variable $z\to-z$
\[
\Delta^\beta(z,\mathbf{x}) = \Delta^\beta(-z,\mathbf{x}).
\]
Hence, the time-ordered or Feynman propagator $\Delta_F^\beta(t,\mathbf{x})$ can be obtained as the limit of $\Delta^\beta(z,\mathbf{x})$ for $\text{Im}(z)\to0$, where the limit is taken from above for $\text{Re}(z)<0$ and from below otherwise.
Because of its periodicity in the imaginary time, $\Delta^\beta(z,\mathbf{x})$ can be written as a Fourier series. 
We have in particular that  
\begin{equation}\label{eq:thermal-propagator}
	\Delta^\beta(x^0+iu,\mathbf{x}) = \frac{1}{(2\pi)^3}\int d^4p\; e^{ip x}\sum_{\eta\in\mathbb{Z}}\frac{e^{i\frac{2\pi}{\beta}\eta u}}{\omega^2 + \left(\frac{2\pi}{\beta}\eta\right)^2} (\delta(p_0-\omega)+\delta(p_0+\omega)) \left(\frac{1}{2} + i \frac{\pi}{\beta}\frac{\eta}{p_0}\right).
\end{equation}
Furthermore, the Fourier transform computed over the spatial coordinates and the real part of the time coordinate of the thermal propagator is  
\begin{equation}\label{eq:DeltaHat}
	\hat\Delta^\beta(u,p) = \frac{1}{(2\pi)^3}\sum_{\eta\in\mathbb{Z}}\frac{e^{i\frac{2\pi}{\beta}\eta u}}{\omega^2 + \left(\frac{2\pi}{\beta}\eta\right)^2} (\delta(p_0-\omega)+\delta(p_0+\omega)) \left(\frac{1}{2} + i \frac{\pi}{\beta}\frac{\eta}{p_0}\right).
\end{equation}
Notice that on $D$, the symmetries satisfied by the thermal propagator imply that $\hat\Delta^\beta(-u,-p) = \hat\Delta^\beta(u,p)$.  
Finally we observe that  the integral over $p_0$ of $\hat\Delta_+^\beta(u,p)$ defines the {\bf Matsubara propagator}
	\begin{equation}\label{eq:matsubara-propagator}
	\hat\Delta_M^\beta(u,{\mathbf{p}}) = \int dp_0 \hat\Delta_+^\beta(u,p) = \frac{1}{(2\pi)^3}\sum_{\eta\in\mathbb{Z}}  \frac{e^{i\frac{2\pi}{\beta}\eta u}}{\omega^2 + \left(\frac{2\pi}{\beta}\eta\right)^2}
	\end{equation}
where $\frac{2\pi}{\beta}\eta $ for $\eta\in\mathbb{Z}$ are known as {\bf Matsubara frequencies}.

\subsection{Graphical expansion of star and time-ordered products - real time formalism}\label{sec:graph-rt}
	
We shall now discuss the graphical expansion of $\star-$product as well as time-ordered product. This will provide a convenient graphical expansion of the Bogoliubov formula \eqref{eq:bogoliubov}.
From now on we shall denote $\star_{\Delta^\beta_+}$ with $\star$.
The graphical expansion arises from the observation that \eqref{eq:star-product-abstract} can be represented as
\begin{equation}\label{eq:star-product-2fact}
F\star G = \sum_{n\geq 0} \frac{\hbar^n}{n!} \langle F^{(n)},(\Delta_+^\beta)^{\otimes n}  G^{(n)} \rangle_n
= M e^{\Gamma^{\Delta^+_\beta}_{12}} (F\otimes G), \qquad F,G\in\mathcal{F}
\end{equation}
where $M$ is the pullback on $\mathcal{F}^{\otimes n}$ of the map $\iota:\mathcal{E}\to \mathcal{E}^{ n}$ defined by $\iota(\varphi) :=  (\varphi,\dots, \varphi)$.
Specifically we have
\begin{equation}\label{eq:M}
M(A_1\otimes \dots \otimes A_n)(\varphi)= A_1(\varphi)\dots  A_n(\varphi)\,.
\end{equation}
Furthermore, in \eqref{eq:star-product-2fact}
\begin{equation}\label{eq:Gamma-extension}
\Gamma^{\Delta_+^\beta}_{ij} = \hbar\langle \Delta_+^\beta, \frac{\delta}{\delta\varphi_i}\otimes \frac{\delta}{\delta\varphi_j}   \rangle
\end{equation}
where $\frac{\delta}{\delta\varphi_i}$ denotes the first functional derivative of the $i-$th element of a tensor product of functionals. 
In view of the associativity of this product and denoting $\Gamma_{ij}=\Gamma^{\Delta_+^\beta}_{ij}$, we have 
\begin{align}
A_1\star\dots \star A_n &= M \bigg[\bigg(\prod_{i<j}  e^{ \Gamma_{ij}}\bigg) A_1\otimes \dots \otimes A_n \bigg]
\notag
\\
 &= M \left(( e^{ \sum_{i<j} \Gamma_{ij}}) A_1\otimes \dots \otimes A_n \right)
 \notag
 \\
 &= M \circ P_n \left( A_1\otimes \dots \otimes A_n \right)
 \label{eq:star-prod}
\end{align}
where we have introduced the operator $P_n:\mathcal{F}^{\otimes n}\to \mathcal{F}^{\otimes n}$ defined as
\begin{equation}\label{eq:Pn}
P_n :=  e^{ \sum_{i<j} \Gamma_{ij}} =  \prod_{i<j} \sum_{l_{ij}=0}^{\infty} \frac{\Gamma_{ij}^{l_{ij}}}{l_{ij}!}.
\end{equation}
This operator can be seen as a formal power series in $\hbar$. The $k-$th order in this power series is a sum indexed over all possible graphs of $n$ vertices and $k$ edges. 
To formalize this observation let us denote by $\mathcal{G}_n$, the set of graphs $G$ whose vertices are $V(G)=\{1,\dots, n\}$ and with arbitrary number of edges $E(G)$.    
Hence,
\begin{equation}\label{eq:graph-exp}
P_n = \sum_{G\in\mathcal{G}_n} P_G\,,
\end{equation}
where 
\begin{equation}\label{eq:PG}
P_G := \frac{1}{\text{Sym}(G)} \left\langle p_G  , \delta_G   \right\rangle    
\end{equation}
with $\text{Sym}(G) = \prod_{i<j} l_{ij}!$ being $l_{ij}$ is the number of edges jointing the $i-$th and $j-$th vertex
of $G$.
The factors $p_G$ is defined by
\begin{equation}\label{eq:pG}
p_{G}  = \prod_{e\in E(G)} \Delta^{\beta}_{+}(x_{ e}-y_{ e})\,.
\end{equation} 
For all edge $e\in E(G)$ we shall denote by $s(e)$ (\textit{resp.} $r(e)$) the source (\textit{resp.} the target) of $e$ -- notice that $s(e)<r(e)$ for all $e\in E(G)$.
With this notation $\delta_G$ is defined by
\begin{equation}\label{eq:deltaG}
	 \delta_G =  \frac{\delta^{2|E(G)|}}{\prod_{i\in V(G)} \left(\prod_{e\in E(G): s(e)=i }\delta \varphi_i(x_{e})
	\prod_{e'\in E(G): r(e')=i } \delta \varphi_i(y_{e'}) \right)}\;.
\end{equation}
We observe that, since $\Delta_{+}^\beta$ is not symmetric, the way in which the vertices in every edge are ordered is important. 
This order is fixed by ordering the labels of the vertices of each graph, because the vertices are in one to one correspondence with the various factors $A_i$ in the $\star-$product we are analyzing.    

Since $A_i$ are not local functionals, the support of the functional derivatives $\delta_G$ in each element of the tensor product are not restricted to the diagonals. For this reason, the vertices of the graphs are not in one to one correspondence with single points of the spacetime. 

We finally stress that, even if both $p_G$ and the application of $\delta_G$ on tensor products of functionals are distributions, their composition, necessary to compute $P_G(A)$ in \eqref{eq:PG}, is always well-defined thanks to the form of the $\operatorname{WF}$ of $\Delta_+^\beta$ and to the form of the $\operatorname{WF}$ of the microcausal functionals in $\mathcal{F}$. A careful discussion of these aspects and the proof of these statements can be found in \cite{HW01} \cite{BFK}.  

\bigskip
The time-ordered product among $n$ regular functionals can be computed in a similar way as the $\star$-product. They are also expanded as a power series in $\hbar$ and the $k-$th order contribution of this series is a sum labelled over the graphs formed by $n$ vertices  and $k$ edges. The only difference with respect to the graphical expansion of the $\star$-product is the choice of the propagator used in $\Gamma_{ij}$.
Actually, 
\begin{equation}\label{eq:Tmap}
T(A_1,\dots, A_n)  =  M \bigg[\bigg(e^{ \sum_{i<j} \Gamma^{\Delta^\beta_F}_{ij}}\bigg)A_1\otimes \dots \otimes A_n\bigg]
= M \circ T_n \left( A_1\otimes \dots \otimes A_n \right)
\end{equation}
where $\Delta_F^\beta$ is the Feynman or time-ordered propagator associated to $\Delta^\beta_+$ -- \textit{cf.} equation \eqref{Eqn: Feynman propagator}. 
Hence, similarly to \eqref{eq:Pn}, 
\[
T_n:=  e^{ \sum_{i<j} \Gamma^{\Delta_F^\beta}_{ij}} =  \prod_{i<j} \sum_{l_{ij}=0}^{\infty} \frac{\Gamma_{ij}^{l_{ij}}}{l_{ij}!}
=
\sum_{G\in\mathcal{G}_n} T_G   
\]
where $T_G$ is computed in close analogy with \eqref{eq:PG}
\begin{equation}\label{eq:TG}
T_G := \frac{1}{\text{Sym}(G)} \left\langle t_G  , \delta_G   \right\rangle.   
\end{equation}
Here $\text{Sym}(G)$ and $\delta_G$ are exactly as in \eqref{eq:PG} and \eqref{eq:deltaG}.
Finally $t_G$ is constructed as in \eqref{eq:pG} where $\Delta_+^\beta$ is replaced by $\Delta_F^\beta$ 
\begin{equation}\label{eq:tG}
t_G =  \prod_{e\in E(G)} \Delta^{\beta}_{F}(x_{ e}-y_{ e}).
\end{equation}
The action of $T_G$ on a tensor product of regular functionals gives finite results.
Furthermore, even if it is not possible to extend $T_G$ over generic tensor products of microcausal functionals,   $T_G$ can be extended to tensor products of local functionals. 

As discussed in the introduction,
this extension is not unique and a renormalization procedure needs to be employed in order to give meaning to the extension of $t_G$ on the diagonals, namely over coinciding points.  
We observe that renormalization does not spoil the graphical expansion of $T_n$. It simply happens that $T_G$ in \eqref{eq:TG} is not defined with a $t_G$ as in \eqref{eq:tG} but renormalized  distributions need to be considered at the place of products of time-ordered propagators. We shall not enter this discussion here, we refer to \cite{BF00} and to \cite{HW01} for further details. 

Interacting fields are constructed by means of the Bogolibov map $R_V$ -- \textit{cf.} equation \eqref{eq:bogoliubov} -- which employs both (anti)time-ordered and $\star$- products. Therefore also $R_V$ admits an expansion in terms of graphs which involves three type of edges.
We have already seen that both the time-ordered product and the $\star$-product can be represented in graphs. Notice that, if $V$ is a local field, $S(V)$ is formally a unitary operator, and furthermore, 
\[
S^{-1}(V)  = \overline{S}(-V)
\]	
where $\overline{S}$ is the anti time-ordered exponential, namely the exponential constructed with the anti time-ordered product.
The latter is defined as in \eqref{eq:Tmap}, with the thermal anti Feynman propagator $\overline{\Delta^\beta_F}$ related to $\omega^\beta$ -- \textit{cf.} equation \eqref{Eqn: anti-Feymman propagator} -- at the place of $\Delta_F$.

\bigskip
Notably, the graphical expansion of $R_V$ in \eqref{eq:bogoliubov}, is equivalent to the graphical expansion obtained with the Feynman rules of the real time formalism of thermal field theory. Actually, in $R_V$, $\star$-products of time and anti time-ordered products appear, however these products are not mutually associative, hence the various operations need to be taken in a prescribed order.
 
In order to cope with this we double the number of field configurations, $\phi=(\varphi^1,\varphi^2)\in C^{\infty}(M)\times C^{\infty}(M)$, as it is customary in the real time formalism.
Then we use a single product rule among functionals over these field configurations, with a modified propagator which contains all the propagators described above.
This new matrix-valued propagator $D$ is defined by
\begin{equation}\label{eq:real-time-propagator}
D(x):= 
\begin{pmatrix}  \Delta_F^\beta(x) & \Delta_+^\beta(x)   \\ 
 \Delta_+^\beta(x)    & \overline{\Delta_F}^\beta(x)
\end{pmatrix}\,.
\end{equation}
In particular, if we consider regular functionals over $\Phi$ and if we denote them by $\tilde{\mathcal{F}}_{\text{reg}}$ we have that
\[
F\cdot_D G =  M e^{\tilde{\Gamma}^{D}_{12}} (F\otimes G),\qquad F,G\in\tilde{\mathcal{F}}_{\text{reg}}
\]
where now $\tilde{M}$ is the extension to $\tilde{\mathcal{F}}_{\text{reg}}^{\otimes n}$ of the map $M$ defined in \eqref{eq:M}, furthermore
$\tilde{\Gamma}^{D}_{12}$ is the extension of $\Gamma$ introduced in \eqref{eq:Gamma-extension} to $\tilde{\mathcal{F}}_{\text{reg}}^{\otimes 2}$
\[
\tilde{\Gamma}^{D}_{ij} = 
\hbar\sum_{a,b\in\{1,2\}}\langle D_{ab}, \frac{\delta}{\delta\phi^a_i}\otimes \frac{\delta}{\delta\phi^b_{j}}   \rangle
\]
where now $\frac{\delta}{\delta\phi^a_i}$ denotes the component $a$ of the first functional derivative of the $i-$th element of a tensor product of functionals.
The product defined above is a well-defined and associative on regular functionals, moreover, it admits the following graphical expansion similar to equations (\ref{eq:star-prod}-\ref{eq:Pn}-\ref{eq:graph-exp}):
\begin{align}
A_1\cdot_D \dots \cdot_D A_n  
&=  
\tilde{M} \bigg[\bigg( e^{ \sum_{i<j} \tilde{\Gamma}^{D}_{ij}}\bigg) A_1\otimes \dots \otimes A_n \bigg]
\notag
\\
&= \tilde{M} \circ 
\sum_{G\in\mathcal{G}_n} \frac{1}{\text{Sym}(G)} \left\langle d_G  , \tilde{\delta}_G   \right\rangle  
A_1\otimes \dots \otimes A_n 
,\qquad A_i\in \tilde{\mathcal{F}}_{\text{reg}}\,,
\label{eq:real-time-prop}
\end{align}
where now in analogy with (\ref{eq:tG}-\ref{eq:deltaG})
\[
d_G =  \prod_{e\in E(G)} D(x_{ e}-y_{ e})\,,\qquad
\tilde{\delta}_G =  \frac{\delta^{2|E(G)|}}{\prod_{i\in V(G)} \left(\prod_{e\in E(G): s(e)=i }\delta \phi_i(x_{e})
\prod_{e'\in E(G): r(e')=i } \delta \phi_i(y_{e'}) \right)}\,.
\]
Notice that, it is not possible to extend \eqref{eq:real-time-prop} to generic local fields because in general non local (hence non renormalizable) divergences will appear in that product. 
As an example, consider the product of $\varphi_1\varphi_2$ with $\varphi_1\varphi_1$. Then, according to \eqref{eq:real-time-prop} the Feynman propagator will be multiplied with the thermal propagator and this does not give a well-defined distribution even outside the diagonal.
This feature is also related to the fact that the time-ordered product and the $\star$-product are not mutually associative. 
In spite of this fact, we can extend \eqref{eq:real-time-prop} to tensor products of local fields of the form
\[
A_1 \otimes \dots \otimes A_n \otimes B_1\otimes \dots \otimes B_l 
\]
where $A_i$ are local fields which depends only on the second component of the field configuration, while $B_i$ are local fields which depends only on the first component of the field configuration. This extension is well-defined up to ordinary renormalization of the time-ordered products.
Notice that this extension is sufficient to construct the Bogoliubov map.

If we now denote by $\mathcal{S}(V_i)$ the exponential constructed with $\cdot_D$ and where $V_i$ is a local field which depends only on the configuration $\varphi^i$, we have the following proposition
\begin{proposition}\label{pr:real-time-formalism}
The Bogoliubov map takes the following simple form
\[
R_V(A)(\varphi) = \tilde{M} \left( \mathcal{S}(V_2) \cdot_{D} \mathcal{S}(V_1) \cdot_D {A_1}   \right)(\varphi,\varphi), \qquad A \in \mathcal{F}_{\text{loc}}\,.
\]
Furthermore, since in $(\mathcal{F}, \star_\beta)$ the state $\omega^\beta$ is nothing but the evaluation over the vanishing configuration, namely $\omega^\beta(A) = A(0)$, we have that
\begin{align}
\nonumber
\omega^\beta(R_V({A})) 
&= \sum_{n_1,n_2} \frac{i^{n_1} (-i)^{n_2}}{n_1!n_2!}  \underbrace{V_2\cdot_D \dots \cdot_D V_2}_{n_2}\cdot_D \underbrace{V_1\cdot_D \dots \cdot_D V_1}_{n_1} \cdot_D A_1\Bigg|_{(\varphi_1,\varphi_2)=(0,0)}\\
\label{Eqn: Moeller expansion}
&= \sum_{n_1,n_2} \frac{i^{n_1} (-i)^{n_2}}{n_1!n_2!}  \tilde{M}
\sum_{G\in\mathcal{G}_{n}} \frac{1}{\text{\normalfont Sym}(G)} \left\langle d_G  , \tilde{\delta}_G   \right\rangle  
\underbrace{V_2\otimes \dots \otimes V_2}_{n_2}\otimes \underbrace{V_1\otimes \dots \otimes V_1}_{n_1} \otimes A_1\Bigg|_{(\varphi_1,\varphi_2)=(0,0)}\,
\end{align} 
where $n=n_1+n_2+1$.
\end{proposition}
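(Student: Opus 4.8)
\medskip
\noindent\textit{Strategy of the proof.}
The plan is to reduce the statement to the single structural fact that the doubled product $\cdot_D$ built from the matrix kernel $D$ of \eqref{eq:real-time-propagator} realises \emph{simultaneously} the three products $\star=\star_{\Delta^\beta_+}$, $T$ and $\overline{T}$ that enter the Bogoliubov formula \eqref{eq:bogoliubov}, once the arguments are distributed suitably between the two field slots. First I would rewrite $R_V(A)=S(V)^{-1}\star T(S(V),A)$, use $S(V)^{-1}=\overline{S}(-V)$, and expand both formal $S$-matrices as power series in $V$, obtaining $R_V(A)=\sum_{n_1,n_2\geq 0}\frac{c_{n_1,n_2}}{n_1!\,n_2!}\,\overline{T}(V^{\otimes n_2})\star T(V^{\otimes n_1},A)$ as an identity in $\mathcal{F}[[\lambda]]$, where the numerical coefficients $c_{n_1,n_2}$ are read off from the time-ordered exponential of $V$ and from $\overline{S}(-V)$; these are precisely the factors that reappear in front of each term of \eqref{Eqn: Moeller expansion}.

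The key step is the pairwise rule for $\cdot_D$: if $F$ depends only on the configuration $\varphi^a$ and $G$ only on $\varphi^b$, then in $\tilde\Gamma^D_{12}(F\otimes G)=\hbar\sum_{c,d}\langle D_{cd},\frac{\delta}{\delta\phi^c_1}\otimes\frac{\delta}{\delta\phi^d_2}\rangle(F\otimes G)$ only the $(c,d)=(a,b)$ term survives, so $F\cdot_D G$ reduces to the single Gaussian product built from the kernel $D_{ab}$ alone. Reading off the entries of $D$ in \eqref{eq:real-time-propagator}, this means that $\cdot_D$ restricted to functionals of the first slot is $T$ (since $D_{11}=\Delta^\beta_F$), restricted to functionals of the second slot is $\overline{T}$ (since $D_{22}=\overline{\Delta_F}^\beta$), and between the two slots is $\star$ (since $D_{12}=D_{21}=\Delta^\beta_+$); by associativity of $\cdot_D$ the same holds for $n$ factors. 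Applying this to $\mathcal{S}(V_2)\cdot_D\mathcal{S}(V_1)\cdot_D A_1$, with $V_i$ and $A_1$ placed in slot $i$, and expanding the two $\cdot_D$-exponentials, each term equals $\overline{T}(V^{\otimes n_2})$ computed in slot $2$, $\star$-multiplied with $T(V^{\otimes n_1},A)$ computed in slot $1$, with the coefficients of the previous paragraph; restricting the resulting functional of $(\varphi^1,\varphi^2)$ to the diagonal configuration $\varphi^1=\varphi^2=\varphi$ — which is what $\tilde M(\,\cdot\,)(\varphi,\varphi)$ does, and which is legitimate because $R_V(A)$ depends on a single field — collapses the two slots and yields the first displayed identity. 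The expansion \eqref{Eqn: Moeller expansion} then follows because in the representation $(\mathcal{F},\star_{\Delta^\beta_+})$ one has $\omega^\beta(\,\cdot\,)=(\,\cdot\,)|_{\phi=0}$: it suffices to evaluate the identity at $(\varphi^1,\varphi^2)=(0,0)$ and to insert the graphical expansion \eqref{eq:real-time-prop} of the iterated $\cdot_D$-product.

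The main obstacle is not combinatorial but one of domains and renormalisation: $\cdot_D$ and its graphical expansion \eqref{eq:real-time-prop} were only established on regular functionals, whereas $V$ and $A$ are local. I would invoke the extension of $\cdot_D$ to tensor products of the special shape $A_1\otimes\dots\otimes A_{n_2}\otimes B_1\otimes\dots\otimes B_{n_1}$ with the $A_j$ depending only on $\varphi^2$ and the $B_j$ only on $\varphi^1$ — exactly the shape occurring in $\mathcal{S}(V_2)\cdot_D\mathcal{S}(V_1)\cdot_D A_1$ — and then check two points. First, inside each slot this extension is nothing but the ordinary Epstein--Glaser renormalisation of $T$, respectively $\overline{T}$, so it coincides with the renormalised $S(V)$ and $S(V)^{-1}$ used to define $R_V$ and introduces no additional ambiguity. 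Second, every mixed edge carries $\Delta^\beta_+$, which is a Hadamard two-point function, so its products with $\Delta^\beta_F$ and $\overline{\Delta_F}^\beta$ away from coinciding points are well-defined microcausal distributions by the same wave-front-set calculus that makes $\star$ well-defined; and since a mixed edge always joins a slot-$1$ vertex to a slot-$2$ vertex — genuinely distinct factors in $S(V)^{-1}\star T(S(V),A)$ — no new thin diagonal has to be renormalised. These are exactly the microlocal inputs that make $S(V)^{-1}\star T(S(V),A)$ well-defined in the first place, so the identity already holds at the level of well-defined objects and the argument closes.
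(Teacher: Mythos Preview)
Your argument is correct and is precisely the one the paper implicitly has in mind: the proposition is stated in the paper without proof, as an immediate consequence of the preceding discussion (the definition of $\cdot_D$, the matrix propagator \eqref{eq:real-time-propagator}, and the extension of $\cdot_D$ to tensor products of the form $A_1\otimes\dots\otimes A_{n_2}\otimes B_1\otimes\dots\otimes B_{n_1}$), and your write-up supplies exactly those details. The only point where a reader might want you to be more explicit is the asymmetry of the mixed edge: since $\Delta^\beta_+$ is not symmetric, it matters that in $\tilde\Gamma^D_{ij}$ the left factor (here $V_2$, slot $2$) contributes the first argument and the right factor (here $V_1$ or $A_1$, slot $1$) the second, so that the surviving entry is $D_{21}(x-y)=\Delta^\beta_+(x-y)$, which is exactly the kernel of $\overline{S}(-V)\star T(S(V),A)$; you assert this correctly but a line making the index bookkeeping visible would remove any doubt.
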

Notice that equation \eqref{Eqn: Moeller expansion} takes into account in a single expansion the combinatoric necessary to expand the anti-time-ordered products, the time-ordered products and the $\star$-products present in the Bogoliubov map.
Furthermore, the graphical expansion described in proposition \ref{pr:real-time-formalism} contains
the same Feynman rules of the real time formalism used in thermal field theory -- \textit{cf.} \cite{LeBellac}.

\bigskip
The expectation values discussed above depend on the time cut-off $\chi$. 
We observe that in specific cases, it is possible to obtain the equilibrium state for an interacting theory by considering a suitable limit where observables are translated at infinite positive times \cite[Thm. 3.4]{DFP-18}.
This allows to remove the time cut-off $\chi$ by translating the observables far in the future. 
In more details, considering the limit for large times and keeping $h$ fixed, we have
\begin{equation}\label{eq:return-to-equilibrium}
\lim_{t\to\infty} \omega^{\beta}(\alpha^V_t(R_V(A))) = \omega^{\beta, V}(R_V(A)), \qquad h\in C^{\infty}_0 (\Sigma)\,.
\end{equation}
In these situations we say that the return to equilibrium property holds true.
The fulfilment of \eqref{eq:return-to-equilibrium} descends from clustering properties of $\omega^\beta$ under the interacting time evolution, namely 
\begin{equation}\label{eq:clustering-condition}
\lim_{t\to\infty }\left(\omega^\beta(\alpha^V_t(A) \star B) - \omega^\beta(\alpha^V_t(A))   \omega^\beta(B)\right)  = 0, \qquad h\in C^{\infty}_0 (\Sigma),
\end{equation}
established in \cite[Prop. 3.3]{DFP-18}. 
Unfortunately, the clustering property and the return to equilibrium do not hold in the limit $h\to1$.
Hence, in that case, to evaluate the $n$-point functions on the interacting KMS state the contribution from $U(i\beta)$ needs to be taken into account.

\subsection{Graphical expansion of the state - Matsubara formalism}\label{sec:graph-matsubara}

In this section we provide a graphical expansion for the expectation value of $\omega^{\beta,V}(A)$.
Together with the result of section \ref{Sec: Expansion of the perturbative series} this would lead to practical rules for computing correlation functions for the interacting theory -- \textit{cf.} section \ref{sec:graph-exp}.
We start with formula \eqref{Eqn: cluster expansion of interacting KMS state} and we observe that the connected $n-$point functions $\omega^{\beta,c}$ admit the following graphical expansion:
\[
	\omega^{\beta,c}(A_0\otimes \dots \otimes A_n )   =  \sum_{G\in\mathcal{G}_{n+1}^c} \prod_{i<j} \left.\frac{\mathcal{D}_{ij}^{l_{ij}}}{l_{ij}!}  A_0\otimes \dots \otimes A_n \right|_{(\varphi_0,\dots ,\varphi_n)=0}\,.
\]
Here $\mathcal{G}_{n+1}^c$ is the set of connected graphs with $n+1$ vertices while $l_{ij}$ denotes the number of lines (edges) joining the $i-$th and $j-$th vertex  in the graph $G$. In addition, in the previous formula  
\[
	\mathcal{D}_{ij}:= \int dxdy\; \Delta^\beta_+(x-y) \frac{\delta}{\delta \varphi_i(x)} \otimes \frac{\delta}{\delta \varphi_j(y)}.
\]
As in \eqref{eq:PG} or in \eqref{eq:TG}, 
instead of indexing the products over the vertices of the graphs, it is easier to compute a product over the edges.
Hence, for every graph $G \in \mathcal{G}_{n+1}^c$ we have
\begin{align*}
	\omega^{\beta,c}(A_0\otimes \dots \otimes A_n )   =  \sum_{G\in\mathcal{G}_{n+1}^c}
 	\frac{1}{\text{Sym}(G)}\left. \left( \prod_{l\in E(G)}     \mathcal{D}_{s(l) r(l)}  \right) A_0\otimes \dots \otimes A_n \right|_{(\varphi_0,\dots ,\varphi_n)=0}\,,
\end{align*}
where as before $\text{Sym}(G)=\prod_{i < j} l_{ij}! $. 

We shall now prove that, because of the translation invariance of the thermal two-point function,
the integral over the simplex present in every component of $\omega^{\beta,V}$ in \eqref{Eqn: cluster expansion of interacting KMS state} can be extended to an integral over a box. To this end we need some preliminary results.

\begin{proposition}\label{pr:Fsymmetric}
For a fixed $h \in \mathcal{D}(\Sigma)$ and for $A,K\in\mathcal{A}^I(\Sigma_{2\epsilon})$, consider 
\[
F\colon\beta S_n\to\mathbb{C}\,,\qquad F(u_1,\dots, u_n) := \omega^{\beta,c}(A\otimes \alpha_{i u_1} K \otimes \dots \otimes \alpha_{i u_n} K)\,.
\]
Then $F$ can be extended to a symmetric function on  $\beta B_n\setminus D_n$ where 
$\beta B_n = (0,\beta)^{\times n}$ and $D_n=\{(u_1,\dots, u_n)\in\mathbb{R}^n| \exists i\neq j ,  u_i=u_j  \}$.
Furthermore, it holds that for $(u_1,\dots, u_n)\in \beta B_n\setminus D_n$
\begin{gather}
	F(u_1,\dots, u_n) = \sum_{G\in\mathcal{G}_{n+1}^c}\frac{1}{\text{\normalfont Sym}(G)}  F_G(u_1,\dots, u_n) := \sum_{G\in\mathcal{G}_{n+1}^c}\frac{1}{\text{\normalfont Sym}(G)} \cdot
	\notag
	\\
	\cdot\Bigg[  \prod_{l\in E(G)} \int dx_ldy_l\;   \Delta^\beta(x_{l}-y_{l} +i e_0 (u_{s(l)}-u_{r(l)}))  \frac{\delta^2}{\delta\varphi_{s(l)}(x_{l})\delta\varphi_{r(l)}(y_{l})} \Bigg] A\otimes K\otimes  \dots \otimes K \bigg|_{(\varphi_0,\dots ,\varphi_n)=0}\,,
	\label{eq:expansion-symmetry}
\end{gather}
where $\Delta^\beta$ is the thermal propagator introduced in \eqref{eq:thermal-propagator}, originally defined for $-\beta<\text{Im}(t)<0$ and extend to other values of $\text{Im}(t)\in (-\beta,\beta)\setminus \{0\}$.
\end{proposition}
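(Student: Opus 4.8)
The plan is to establish the claimed extension and the graphical formula \eqref{eq:expansion-symmetry} in three stages: first write out $F$ via the connected-graph expansion with the \emph{analytically continued} propagators already available from section \ref{sec:propagator}, then verify that each such propagator remains well-defined (analytic) on $\beta B_n\setminus D_n$, and finally deduce symmetry from the reflection symmetry $\Delta^\beta(z,\mathbf{x})=\Delta^\beta(-z,\mathbf{x})$ together with the $\beta$-periodicity in imaginary time.

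\medskip
\noindent\textbf{Step 1: the graphical expansion on the simplex.} Starting from the connected-function expansion recalled just above the statement, I would apply it to the arguments $A\otimes\alpha_{iu_1}K\otimes\dots\otimes\alpha_{iu_n}K$. The operator $\alpha_{iu}$ acts on a local functional by the imaginary time translation $e_0\,u$ of its smearing function, so inside the edge operator $\mathcal{D}_{ij}$ the propagator $\Delta^\beta_+(x-y)$ gets replaced — after shifting the integration variables by $ie_0 u_{s(l)}$ on one leg and $ie_0 u_{r(l)}$ on the other — by $\Delta^\beta_+\big(x_l-y_l+ie_0(u_{s(l)}-u_{r(l)})\big)$. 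Because $(u_1,\dots,u_n)\in\beta S_n$ means $0<u_1<\dots<u_n<\beta$, for an edge with $s(l)<r(l)$ the imaginary part $u_{s(l)}-u_{r(l)}$ lies in $(-\beta,0)$, which is exactly the strip on which $\Delta^\beta_+$ coincides with the thermal propagator $\Delta^\beta$ of \eqref{eq:thermal-propagator}; for the edges incident to the vertex carrying $A$ (index $0$) one argument of $\alpha$ is absent, i.e. $u_0=0$, and the shift argument is $\pm u_j\in(0,\beta)$, still inside the analyticity strip after using $\Delta^\beta(-z,\mathbf{x})=\Delta^\beta(z,\mathbf{x})$. This reproduces \eqref{eq:expansion-symmetry} on $\beta S_n$. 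I should be a little careful that the composition of the distributional kernels $p_G$ with the functional derivatives $\delta_G$ stays well-defined after the imaginary shift; this follows because the wave-front set of $\Delta^\beta$ is unchanged under a purely imaginary time translation (the singular support only touches the real-time light cone, away from which $\Delta^\beta$ is smooth), so the microlocal arguments of \cite{HW01,BFK} invoked in section \ref{sec:graph-rt} still apply.

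\medskip
\noindent\textbf{Step 2: extension to the box minus the diagonal.} Having written $F$ as a finite (at each perturbative order) sum of terms $F_G$, each built from edge factors $\Delta^\beta\big(x_l-y_l+ie_0(u_{s(l)}-u_{r(l)})\big)$, I note that each difference $u_{s(l)}-u_{r(l)}$ is a real number in $(-\beta,\beta)$ as soon as $(u_1,\dots,u_n)\in\beta B_n$, and it is nonzero precisely when the corresponding coordinates differ — in particular on $\beta B_n\setminus D_n$ every such difference avoids $\beta\mathbb{Z}$. By Fulling's analyticity domain recalled in section \ref{sec:propagator}, $\Delta^\beta(z,\mathbf{x})$ is analytic whenever $\mathrm{Im}(z)\notin\beta\mathbb{Z}$, so each $F_G$ is a well-defined (indeed real-analytic) function of $(u_1,\dots,u_n)$ on all of $\beta B_n\setminus D_n$, giving the desired extension of $F$.

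\medskip
\noindent\textbf{Step 3: symmetry.} It remains to show the extended $F$ is symmetric under permutations of $u_1,\dots,u_n$. Transpositions of adjacent $K$-slots amount to relabelling two vertices in the graph expansion; since the sum runs over \emph{all} connected graphs on $\{0,1,\dots,n\}$ with the appropriate symmetry factors, the relabelled sum is the same sum. The only subtlety is that swapping two vertices may reverse the source/target orientation of an edge between them — and also changes which strip $u_{s(l)}-u_{r(l)}$ lives in — but here the reflection symmetry $\Delta^\beta(-z,\mathbf{x})=\Delta^\beta(z,\mathbf{x})$, which holds on the whole analyticity domain, makes the edge factor insensitive to the orientation. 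Likewise, translation invariance of $\Delta^\beta$ in the spatial and real-time arguments lets me freely shift $x_l,y_l$, which is what made the replacement in Step 1 legitimate. Combining Steps 1--3 yields the proposition.

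\medskip
\noindent\textbf{Main obstacle.} The genuinely delicate point is Step 2's claim that the \emph{composition} $\langle p_G,\delta_G\rangle$ of the shifted propagator products with the functional derivatives of $A$ and $K$ remains well-defined away from $D_n$ — i.e. that no new ultraviolet or non-renormalizable problems are introduced by the imaginary time translations, and that the resulting expression is genuinely analytic in the $u_i$ rather than merely formally defined. This is a microlocal/analyticity argument: one must check that the wave-front sets still satisfy the transversality conditions guaranteeing the pairings exist, uniformly as the $u_i$ vary over compact subsets of $\beta B_n\setminus D_n$, and that differentiating under the integral in $u_i$ is justified. Everything else — the combinatorics of connected graphs, the bookkeeping of the shifts, the use of the two displayed symmetries — is routine once this analytic control is in place.
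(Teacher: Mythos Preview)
Your proposal is correct and follows essentially the same route as the paper: graph expansion, identification of $\Delta^\beta_+$ with the thermal propagator $\Delta^\beta$ on the simplex, extension to $\beta B_n\setminus D_n$ by analyticity, and symmetry via the reflection identity $\Delta^\beta(z,\mathbf{x})=\Delta^\beta(-z,\mathbf{x})$ combined with a bijection $G\mapsto p(G)$ on connected graphs. The ``main obstacle'' you flag is milder than you fear: once $\mathrm{Im}(z)\notin\beta\mathbb{Z}$ the thermal propagator is a genuine smooth (indeed analytic) function of $(x^0,\mathbf{x})$, so the pairing with the compactly supported distribution $\Psi_G(X,Y)$ obtained from the functional derivatives is just integration of a smooth kernel against a compactly supported distribution---no wave-front-set transversality is needed, and the paper accordingly handles this step simply by noting that $\hat\Psi_G$ is entire of polynomial growth.
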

\begin{proof}
Since $\dot{V}$ is a polynomial of finite order in the field, and since the lowest order in the perturbation parameter in $K$ is $1$, at fixed perturbation order only finitely many connected graphs contribute to the sum defining  $F(u_1,\dots, u_n)$. 
Consider a connected graph $G\in\mathcal{G}_{n+1}^c$. The corresponding contribution to $F$ for $(u_1,\dots, u_n)\in \beta S_n$ is 
\begin{equation}\label{eq:step1}
F_G(u_1,\dots, u_n) =  \int dXdY \left( \prod_{l\in E(G)} \;   \Delta^\beta_+(x_{l}-y_{l} +ie_0 (u_{s(l)}-u_{r(l)}))\right) \Psi_G(X,Y)
\end{equation}
where $X=(x_1,\dots x_{|E(G)|})$, $Y=(y_1,\dots y_{|E(G)|})$ and with the caveat that $u_0=0$. Furthermore, 
\begin{equation}\label{eq:psiG}
\Psi_G(X,Y) :=  \left.\left( \prod_{l\in E(G)} \frac{\delta^2}{\delta\varphi_{s(l)}(x_{l})\delta\varphi_{r(l)}(y_{l})} \right)  A\otimes K\otimes  \dots \otimes K \right|_{(\varphi_0,\dots ,\varphi_n)=0}.
\end{equation}
Notice that $\Psi_G$ is a compactly supported distribution and hence its Fourier transform $\hat{\Psi}_G(-P,P)$ is an analytic function which grows at most polinomially in some directions -- here $P=(p_1,\dots, p_{|E(G)|})$.

In view of the properties of the thermal two-point function and since $\Delta^\beta_+(x+ie_0s) = \Delta^\beta(x+ie_0s)$ for $s\in (-\beta,0)$, we have that for $U=(u_1,\dots, u_n)\in \beta S_n$
\begin{equation}\label{eq:step2}
F_G(U) =  \int dXdY \left( \prod_{l\in E(G)} \;   \Delta^\beta(x_{l}-y_{l} +ie^0 (u_{s(l)}-u_{r(l)}))\right) \Psi_G(X,Y)\,.
\end{equation}
Furthermore, the analyticity properties of the thermal two-point function imply that $F_G$ can be extended to $U\in \beta B_n \setminus D_n$.
We have thus proved that $F$ can be extended to $\beta B_n \setminus D_n$ and that equation \eqref{eq:expansion-symmetry} holds true.

To conclude the proof we have to check that $F$ is symmetric.
Let $U\in \beta B_n\setminus D_n$: Then there exists a permutation $p$ which orders the entries of $U$ and such that $p(U)\in \beta S_n$.
We shall prove that $F(U)=F(p(U))$.
For every graph $G\in\mathcal{G}^c_{n+1}$, we denote with $p(G)$ the (unique) graph $p(G)\in\mathcal{G}^c_{n+1}$ (possibly equal to $G$) obtained applying the permutation $p$ on the last $n$ vertices of $G$. We now prove that
\[
F_G(U) = F_{p(G)}(p(U)).
\]
To prove the previous formula we recall that $\Delta^\beta (x+ise_0) = \Delta^\beta (-x-ise_0)$.
Furthermore, the source and the range of the lines in \eqref{eq:step2} are ordered in such a way that $s(l)<r(l)$.
For every $l\in E(G)$, we consider the corresponding edge $p(l)\in p(G)$.
There are two possibilities, either $p(s(l))=s(p(l))$ or  $p(s(l))=r(p(l))$. In the first case $u_{s(l)} = p(u)_{s(p(l))}$, $u_{r(l)} = p(u)_{r(p(l))}$, $x_{p(l)}=x_{l}$ and $y_{p(l)}=y_{l}$ so,
\[
\Delta^\beta(x_{p(l)}-y_{p(l)} +i e_0 (p(u)_{s(p(l))}-p(u)_{r(p(l))}))
=\Delta^\beta(x_{l}-y_{l} +i e_0(u_{s(l)}-u_{r(l)}))\,.
\]
If, on the other hand, $p(s(l))=r(p(l))$ we have that $u_{s(l)}=p(u)_{r(p(l))}$, $u_{r(l)}=p(u)_{s(p(l))}$, $x_{p(l)}=y_{l}$ and $y_{p(l)}=x_{l}$.
Hence,
\begin{align*}
\Delta^\beta(x_{p(l)}-y_{p(l)} +i e^0(p(u)_{s(p(l))}-p(u)_{r(p(l))})) 
&=
\Delta^\beta(y_{l}-x_{l} +i e_0(u_{r(l)}-u_{s(l)}))
\\
&=
\Delta^\beta(x_{l}-y_{l} +i e_0(u_{s(l)}-u_{r(l)}))\,,
\end{align*}
because $\Delta^\beta (x+ise_0) = \Delta^\beta (-x-ise_0)$. Repeating the analysis for every edge of $G$ we prove the desired symmetry.
\end{proof}
		
The following theorem uses the symmetrisation  of $F_n$ established in Proposition \ref{pr:Fsymmetric} to rewrite the integral over the simplex of $F$ as an integral over a box. The thermal propagators present in the resulting function can be expanded as a sum over the Matsubara frequencies as in \eqref{eq:thermal-propagator}. Furthermore, the integral over the box can be taken and it furnishes the conservation of the Matsubara frequency at every vertex.   		
		
\begin{theorem}\label{th:matsubara-frequency-conservation}
For a fixed $h \in \mathcal{D}(\Sigma)$, and for $A,K\in\mathcal{A}^I(\Sigma)$, as in Proposition \ref{pr:Fsymmetric}, consider 
\[
F(u_1,\dots, u_n) := \omega^{\beta,c}(A\otimes \alpha_{i u_1} K \otimes \dots \otimes \alpha_{i u_n} K).
\]
Let $\beta S_n=\{(u_1,\dots, u_n)| 0<u_1<\dots < u_n <\beta\}$ and $\beta B_n=(0,\beta)^{\times n}$.
It holds that 
\begin{align*}
\int_{\beta S_n} F(u_1,\dots, u_n) du_1\dots du_n  
&=  \frac{1}{n!}\int_{\beta B_n} F(u_1,\dots, u_n) du_1\dots du_n \\
&=  \sum_{G\in\mathcal{G}_{n+1}^c}\frac{1}{\text{\normalfont Sym}(G)}\frac{1}{n!}\int_{\beta B_n} F_G(u_1,\dots, u_n) du_1\dots du_n
\end{align*}
where $F_G$ is as in \eqref{eq:expansion-symmetry}. 
Furthermore, the integral over $\beta B_n$ gives the conservation of the Matsubara frequency over every vertex of the graph, namely,
\begin{gather*}
\int_0^\beta du_1\dots \int_0^\beta du_n F_G(u_1,\dots, u_n) = \frac{1}{\beta^{|E(G)|}} 
	\int dP \sum_{N\in{\mathbb Z}^{|E(G)|}} \left(\prod_{l\in E(G)} \;   
	\tilde\Delta^\beta(n_l,p_l)\right) 
	\\
	\left(\prod_{1\leq j\leq n} \delta\left(\sum_{l'\in E(G), r(l')=j} n_{l'}  -\sum_{{l''}\in E(G), s(l'')=j} n_{l''}\right)\right)
	\hat{\Psi}_G(-P,P)
\end{gather*}
where $N=(n_1,\dots, n_{|E(G)|})$, $\delta$ is the Kronecker delta function, $\tilde{\Delta}^\beta(n_l,p_l)$ is the expansion in Matsubara frequencies of 
\eqref{eq:DeltaHat}, namely
\[
\tilde\Delta^\beta(n,p) =  \frac{1}{\omega^2 + \left(\frac{2\pi}{\beta}n\right)^2} (\delta(p_0-\omega)+\delta(p_0+\omega)) \left(\frac{1}{2} + i \frac{\pi}{\beta}\frac{n}{p_0}\right)
\]
where $\omega=\sqrt{\mathbf{p}^2+m^2}$.
Furthermore, $\hat{\Psi}$ is the Fourier transform of the function $\Psi$ given in \eqref{eq:psiG}.
\end{theorem}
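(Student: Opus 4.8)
The plan is to prove the two displayed equalities in turn. For the first equality, $\int_{\beta S_n} F = \frac{1}{n!}\int_{\beta B_n} F$, I would invoke Proposition~\ref{pr:Fsymmetric}: it tells us that $F$ extends to a function on $\beta B_n\setminus D_n$ which is invariant under all permutations of the variables $(u_1,\dots,u_n)$. Since the diagonal $D_n$ has Lebesgue measure zero and, by the growth estimates on $\hat\Psi_G$ together with the explicit form of $\Delta^\beta$, the function $F$ is integrable on $\beta B_n$, the box decomposes up to null sets into the $n!$ images of the simplex $\beta S_n$ under the symmetric group. Permutation invariance of $F$ then makes each of these contribute equally, giving the factor $1/n!$. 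The second equality is then immediate by inserting the graph expansion \eqref{eq:expansion-symmetry} for $F$ and exchanging the (finite, at fixed perturbative order) sum over $G\in\mathcal{G}_{n+1}^c$ with the integral.

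The substantive part is the last display, the Matsubara-frequency conservation at each vertex. Starting from \eqref{eq:expansion-symmetry}, I would first pass to Fourier space: write each thermal propagator $\Delta^\beta(x_l-y_l+ie_0(u_{s(l)}-u_{r(l)}))$ using its Fourier-series-in-imaginary-time representation \eqref{eq:thermal-propagator}, i.e. as $\int dp_l\, e^{ip_l x_l}e^{-ip_l y_l}\sum_{n_l\in\mathbb{Z}} e^{i\frac{2\pi}{\beta}n_l(u_{s(l)}-u_{r(l)})}\tilde\Delta^\beta(n_l,p_l)$, modulo the $(2\pi)^3$ normalisation. Substituting into $F_G$ and using that the $x_l,y_l$-integrals against the functional-derivative kernel produce precisely $\hat\Psi_G(-P,P)$ (with the sign conventions matching those fixed by $s(l)<r(l)$ in the definition of $\delta_G$), one is left with a sum over $N\in\mathbb{Z}^{|E(G)|}$ and an integral over $P$ of $\big(\prod_l \tilde\Delta^\beta(n_l,p_l)\big)\hat\Psi_G(-P,P)$ times the phase factor $\prod_l e^{i\frac{2\pi}{\beta}n_l(u_{s(l)}-u_{r(l)})}$.

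It remains to carry out the $\int_{\beta B_n} du_1\cdots du_n$. Here one regroups the exponent: $\sum_{l\in E(G)} n_l(u_{s(l)}-u_{r(l)}) = \sum_{j=1}^n u_j\big(\sum_{l:\,s(l)=j} n_l - \sum_{l:\,r(l)=j} n_l\big)$, keeping in mind that the vertex $j=0$ carries $u_0=0$ and so drops out. Each variable $u_j$, $j=1,\dots,n$, then appears in a single factor $\int_0^\beta du_j\, e^{i\frac{2\pi}{\beta}u_j m_j}$ with $m_j\in\mathbb{Z}$ the net incoming-minus-outgoing Matsubara index at vertex $j$; this integral equals $\beta$ if $m_j=0$ and $0$ otherwise, i.e.\ $\beta\,\delta\big(\sum_{l:\,r(l)=j} n_l - \sum_{l:\,s(l)=j} n_l\big)$ with $\delta$ the Kronecker symbol. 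Collecting the $n$ factors of $\beta$ and the overall $(2\pi)^{-3|E(G)|}$ absorbed into $\tilde\Delta^\beta$ — actually into the stated $\beta^{-|E(G)|}$ prefactor once one tracks the normalisations carefully — yields the claimed formula.

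The main obstacle I anticipate is bookkeeping rather than conceptual: making sure the sign of the imaginary shift ($u_{s(l)}-u_{r(l)}$ versus its negative), the orientation convention $s(l)<r(l)$ inherited from $\delta_G$, and the assignment $\hat\Psi_G(-P,P)$ (rather than $\hat\Psi_G(P,-P)$) are all mutually consistent, so that the Kronecker delta comes out as "incoming minus outgoing" at each vertex $j\in\{1,\dots,n\}$ with no leftover constraint at the distinguished vertex $0$. A secondary point needing a line of justification is that interchanging the sum over $\eta\in\mathbb{Z}$ (for each edge) with the $u$-integration and the $P$-integration is legitimate; this follows from the rapid decay of $\tilde\Delta^\beta(n,p)\sim n^{-2}$ in the Matsubara index together with the at-most-polynomial growth of $\hat\Psi_G$ established in the proof of Proposition~\ref{pr:Fsymmetric}, so dominated convergence applies.
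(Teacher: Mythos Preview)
Your overall strategy matches the paper's: symmetrisation via Proposition~\ref{pr:Fsymmetric} for the first equality, then Fourier/Matsubara expansion of the propagators followed by vertex-by-vertex $u_j$-integration for the second. The regrouping of the exponent and the emergence of the Kronecker deltas is exactly as you describe.

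However, what you flag as ``a secondary point needing a line of justification'' --- the legitimacy of the interchanges --- is in fact where the paper spends almost all of its effort, and your proposed justification is not sufficient. The issue is not the sum over Matsubara indices (your $n^{-2}$ argument handles that), but the interchange of the $U$-integral over $\beta B_n$ with the $P$-integral. Knowing only that $\hat\Psi_G(-P,P)$ grows at most polynomially does \emph{not} give joint integrability: the $1/\omega^2$ decay of the propagators need not dominate the polynomial growth of $\hat\Psi_G$ in all $P$-directions. The paper's proof exploits the finer information that $\hat\Psi_G$ is the Fourier transform of a \emph{microcausal} compactly supported distribution, hence rapidly decreasing when all $p_l$ lie in the forward light cone; in the remaining (past-directed) directions one gets exponential suppression $e^{-(\beta-M)\omega}$ from the thermal propagator, but only provided the $u$-differences stay bounded away from $\beta$. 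This forces a domain-splitting $\beta S_n = \beta S_n^- \cup \beta S_n^\epsilon$: on $\beta S_n^-$ the exponential bound is uniform in $U$, while on the boundary layer $\beta S_n^\epsilon$ the paper invokes the KMS condition to cyclically permute the vertices and translate the $u$-variables, reducing iteratively to regions where the uniform bound again applies.

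So your proof sketch is structurally right but underestimates the analytic work. To close the gap you would need to replace ``at-most-polynomial growth of $\hat\Psi_G$'' with the directional rapid-decrease coming from microcausality (cf.\ \cite[Prop.~9]{FredenhagenLindner}), and to carry out the KMS-based domain-splitting to control the propagators uniformly near the boundary of $\beta B_n$.
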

\begin{proof}
Let $\beta D_n=\{(u_1,\dots, u_n)\in \beta B_n| u_i=u_j \text{ for some } i\neq j \}$ be the set of diagonals in $\beta B_n$.
$\beta B_n\setminus \beta D_n \subset \beta B_n$ is then formed by $n!$ disjoint components. 
As discussed in Proposition \ref{pr:Fsymmetric}, the function $F(u_1,\dots, u_n)$ originally defined on $\beta S_n$ can be extended to a symmetric function on $\beta B_n$ up to its diagonals. 
The extended $F$ is symmetric, namely $F(p(U))=F(U)$ for every permutation $p$ of the components of $U=(u_1,\dots, u_n)$.
Hence, the integral over $\beta B_n$ is equivalent to the integral over the simplex $\beta S_n$ up to the number of possible permutations which is $n!$. This proves the first equality of the theorem.

\bigskip \noindent
To prove the conservation of the Matsubara frequencies, namely the second equality of the theorem, we start observing that, 
for every $G\in\mathcal{G}_{n+1}^c$, the corresponding  $F'_G$ can be written with the help of the Fourier transform.

In view of the translation invariance of $\Delta^\beta$ and considering the Fourier transform with respect to  $X$ and $Y$, \eqref{eq:step1} implies that \begin{equation}\label{eq:step2-symmetry}
F_G(u_1,\dots, u_n) =  \int dP  \left(\prod_{l\in E(G)} \;   \hat\Delta^\beta(u_{s(l)}-u_{r(l)},p_l)\right) \hat{\Psi}_G(-P,P)
\end{equation}
where $\hat\Delta^\beta(u,p)$ is given in \eqref{eq:DeltaHat} and is equal to $\hat\Delta^\beta_+(u,p)$ for $u\in (-\beta,0)$. 

\bigskip
Moreover we have
\[
\int_{\beta B_n} dU F_G(u_1,\dots, u_n) =  \int dP \int_{\beta B_n} dU   \left(\prod_{l\in E(G)} \;   \hat\Delta^\beta(u_{s(l)}-u_{r(l)},p_l)\right) \hat{\Psi}_G(-P,P)\,,
\]
namely it is possible to consider the $U$-integral before the $P$-one.
To prove this we first of all mollify every delta functions in the propagators $\hat{\Delta}_+^\beta$.
Then we divide the domain of $U-$integration in $n!$ disjoint contributions obtained removing all the diagonal $\beta B
_n \setminus \beta D_n$.
Up to a permutation of the $U$'s entries $G'=\pi(G)$, every disjoint component of $\beta B
_n \setminus \beta D_n$ can be seen as a an integral over $\beta \mathcal{S}_n$.
We consider one of these components and let $\epsilon\in(0,\beta /(n+1))$. 
We divide the domain of integration in two disjoint parts 
\begin{align*}
	\beta S_n = \beta S_n^- \cup \beta S_n^\epsilon\,,\qquad
	\beta S_n^\epsilon = \{U\in \beta S_n | \beta-u_n <\epsilon \}\,,\qquad
	\beta S_n^-=\beta S_n \setminus \beta S_n^\epsilon\,.
\end{align*}
Then let
\[
g_G(u_0,u_1,\dots,u_n;p_{1},\dots,p_{E(G)} ):=\prod_{l\in E(G)} \;   \hat\Delta^\beta(u_{s(l)}-u_{r(l)},p_l)
\]
and
\[
A_G =\int_{\beta S_n^-} dU   \int dP g_G (0,u_1,\dots,u_n;p_{1},\dots,p_{E(G)})  \hat\Psi_G(-P,P).
\]
In $A_G$ we can exchange the $U$-integral with the $P$-integral
To prove this we observe that $\hat\Psi_G(-P,P)$ is the Fourier transform of a compactly supported distribution, hence it is  an analytic function which grows at most polynomially for large $|P|$. Notice however that  
$g_G(u_0,U,P) \Psi_G(-P,P)$ is of rapid decrease in every direction uniformly in $U$ for $U\in\beta S_n^-$.
Actually, the presence of the mollified delta functions in the propagators the only directions which can be of non rapid decrease for $g_G(u_0,U,P) \Psi_G(-P,P)$ 
are those in which every $p_l\in P$ is null. Furthermore, as shown in Proposition 9 of \cite{FredenhagenLindner},
 since $A$ and $K$ are microcausal functionals, the directions where every component of $P$ is in the forward light cone are of rapid decrease for $\hat\Psi_G(-P,P)$. 
Finally, if some past directed $p_l$ are considered, we notice that 
\[
|g_G(0,U,P)| \leq C e^{-(\beta-M)\omega(\mathbf{p}_l)}, \qquad U\in \beta S_n^-
\]
where $M=\sup|u_i-u_j|<(\beta -\epsilon)$.  

The contribution over $\beta S_n^\epsilon$ can be treated in a similar way. 
Actually, in view of the KMS property the domain of the $U$-integral can be conveniently modified as follows.
The KMS property ensures that
\[
g_G (u_0,u_1,\dots,u_n;p_{1},\dots,p_{E(G)} ) = g_{G'}(u_n-\beta,u_0,u_1\dots,u_{n-1};q_{1},\dots,q_{E(G')} )\,,
\]
where $G'=\pi(G)$, being $\pi$ the cyclic permutation of the vertices which maps the last vertex into the first one, namely $\pi(G)=\{n,0,1,2,\dots, n-1\}$; Furthermore $q_\pi(l) = -p_l$ if $n\in l$ and $q_\pi(l) = p_l$ otherwise. 
Using the invariance under rigid translation of the $u$ variable, we perform the following change of coordinates:
\[
(v_0,\dots, v_n) := (u_n-\beta,u_0,u_1\dots,u_{n-1})+(\beta-u_n,\dots, \beta-u_n )\,.
\] 
We observe that $U=(u_1,\dots, u_n)\in \beta S_n^\epsilon$ if and only if  $V=(v_1,\dots v_n) \in \beta S_{n}^{(1)} = \{ V\in \beta S_n| v_1<\epsilon \}$. 
We thus have
\begin{gather*}
\int_{\beta S_n^\epsilon} dU  \int dP  g_G(u_0,u_1,\dots,u_n;p_{1},\dots,p_{E(G)} ) \Psi_G(-P,P) = 
\\  \int_{\beta S_{n}^{(1)}} dV  \int dQ g_{G'}(v_0,\dots v_n;q_{1},\dots,q_{E(G')} ) \Psi_{G'}(-Q,Q).
\end{gather*}
We can now split the integral over $\beta S_{n}^{(1)}$ in two parts, $\beta S_{n}^{(1)}=(\beta S_{n}^{(1)})^-\cup (\beta S_{n}^{(1)})^\epsilon$ in a similar way as we have done before. Again, the contribution due to $(\beta S_{n}^{(1)})^-$ can be treated as $A_G$ above while we can use the KMS condition to modify the domain of integration of $(\beta S_{n}^{(1)})^\epsilon$ to $\beta S_{n}^{(2)}:=\lbrace V\in \beta S_n| v_1<\epsilon\,,v_2-v_1<2\epsilon\rbrace$. 
If we repeat the procedure $n$ times, we end up with a domain of integration $\beta S_{n}^{(n)}$ where the largest distance between the arguments is $ n \epsilon$. This last contribution can be again treated as $A_G$, namely, its integrand is of rapid decrease in every direction. 
This implies that we can take the integral over $U$ before the integral over $P$.

Consider now the integral of $F_G$ over $\beta B_n$ and expanding the propagator $\Delta^\beta$ as sums over the Matsubara frequencies according to \eqref{eq:DeltaHat} we have
\[
I_G=
\int dP \int_{\beta B_n} dU   \left(\prod_{l\in E(G)} \;   
\sum_{n_l=-\infty}^\infty
e^{i\frac{2\pi}{\beta}n_l (u_{s(l)}-u_{r(l)})}
\tilde\Delta^\beta(n_l,p_l) 
\right) \hat{\Psi}_G(-P,P)
\]
where $\omega_l =\sqrt{\mathbf{p}_l^2+m^2}$.
Reordering the sum and the products in order to collect the exponentials over the vertices we have
\begin{gather*}
I_G=
\int dP \int_{\beta B_n} dU   
\sum_{N} 
\exp{ \left(  i\frac{2\pi}{\beta} \sum_{j=1}^n \left(\sum_{l'\in E(G), s(l')=j} n_{l'}  -\sum_{{l''}\in E(G), r(l'')=j} n_{l''}\right)      u_j \right) }
\\
\cdot \left(\prod_{l\in E(G)} \;
 \tilde\Delta^\beta(n_l,p_l) 
 \right) \hat{\Psi}_G(-P,P)\,.
\end{gather*}
In view of the form of $\tilde\Delta^\beta(n_l,p_l)$ we notice that the integral over all $U$ can now be taken before the sum over $N$. We observe that the integral vanishes unless there is conservation of the Matsubara frequencies on every vertex of the graph $G$. This proves the last equality of the theorem.
\end{proof}

We shall now discuss the limit $h\to1$ of
\begin{align*}
	L_h=\int dU \omega^{\beta,c}(A\otimes \alpha_{iu_1} K \otimes\dots \otimes \alpha_{iu_n} K)\,.
\end{align*}
To this end we write 
\begin{align*}
	K = \int_{\mathbb{R}^3} d^3\mathbf{x} \mathcal{H}_h(\mathbf{x})\,,\qquad
	\mathcal{H}_h(\mathbf{x})   = \int dt\;\dot{\chi}(t) R_{V_h} \left(\mathcal{L}_I(t,\mathbf{x})\right)\,.
\end{align*}
Furthermore, in the limit $h\to1$ we have
\[
\mathcal{H}(\mathbf{x}) = \alpha^{V}_{0,\mathbf{x}}\mathcal{H}(0)\,,
\]
where $\alpha^V_{t,\mathbf{x}}$ denotes the spacetime (interacting) translation of the step $(t,\mathbf{x})$
\[
\alpha^V_{t,\mathbf{x}} R_V \mathcal{L}_I(s,\mathbf{y}) = R_V \mathcal{L}_I(s+t,\mathbf{y}+\mathbf{x}).
\]
Notice that $\mathcal{H}_h(\mathbf{x})$ depends on $h$ through $V_h$ in the Bogoliubov map.
However, thanks to the causal factorisation property and the support property of $\chi$, $\mathcal{H}(\mathbf{x})=\lim_{h\to1} \mathcal{H}_h$ is of compact support and thus it is an element of $\mathcal{F}$ which does not depend on $h$ anymore.

Hence, we consider 
\begin{equation}\label{eq:Lh}
L_h=\int d\mathbf{x}_1\dots d\mathbf{x}_n h(\mathbf{x}_1)\dots h(\mathbf{x}_n) \int dU \omega^{\beta,c}(A\otimes \alpha_{iu_1}\alpha^V_{0,\mathbf{x}_1} \mathcal{H}(0) \otimes\dots \otimes \alpha_{iu_n}\alpha^V_{0,\mathbf{x}_n} \mathcal{H}(0)).
\end{equation}
We are eventually interested in discussing the limit $h\to1$ of the previous expression. 
Notice that, in this way, we are actually taking the adiabatic limit $\lim_{h\to1}\omega^{\beta,V}(R_V(A))$ in two steps.
We are considering first of all the limit $h\to 1$ of the potential in the Bogoliubov map and then the limit $h\to1$ of every $L_h$. Further details on this steps can be found in \cite{FredenhagenLindner} and in the appendix of \cite{DHP}.

\begin{theorem}\label{th:momentum-conservation}
The limit $h\to1$ of $L_h$ is well-defined.
Furthermore, considering the expansion in terms of connected graphs of $L_h$
\[
L_h=\sum_{G\in\mathcal{G}_{n+1}^c}  \frac{1}{ {\text{\normalfont Sym}(G)}  n!}    L_{h,G},
\]
it holds that, for every $G$ in $\mathcal{G}_{n+1}^c$, the limit $h\to 1$ results in the spatial momentum conservation in every vertex of $G$.
\end{theorem}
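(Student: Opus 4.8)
The strategy is to carry out, for the spatial smearing, the analogue of the argument used in the proof of Theorem~\ref{th:matsubara-frequency-conservation} for the imaginary-time variables. Writing $K=\int d^3\mathbf{x}\,h(\mathbf{x})\mathcal{H}(\mathbf{x})$ with $\mathcal{H}(\mathbf{x})=\alpha^V_{0,\mathbf{x}}\mathcal{H}(0)$ and $\mathcal{H}(0)$ the compactly supported, $h$-independent element introduced before \eqref{eq:Lh}, one sees that $L_h=\int_{\beta S_n}dU\,\omega^{\beta,c}(A\otimes\alpha_{iu_1}K\otimes\dots\otimes\alpha_{iu_n}K)$ is precisely of the form treated in Theorem~\ref{th:matsubara-frequency-conservation}. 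First I would apply that theorem graph by graph: it performs the $U$-integration, turning every edge propagator into a Matsubara sum $\tilde\Delta^\beta(n_l,p_l)$ and imposing conservation of the Matsubara frequency at each vertex. Substituting then $K=\int d^3\mathbf{x}\,h(\mathbf{x})\mathcal{H}(\mathbf{x})$ inside the kernel, which is multilinear in the vertex functionals, exhibits $L_{h,G}$ as
\[
L_{h,G}=\int d\mathbf{x}_1\cdots d\mathbf{x}_n\,h(\mathbf{x}_1)\cdots h(\mathbf{x}_n)\int dP\;I_G(P)\,\hat\Psi^{\mathbf{x}}_G(-P,P)\,,
\]
where $I_G(P)$ collects the Matsubara sums $\beta^{-|E(G)|}\sum_{N}\prod_{l\in E(G)}\tilde\Delta^\beta(n_l,p_l)$ together with the Matsubara-conservation Kronecker deltas of Theorem~\ref{th:matsubara-frequency-conservation}, and $\hat\Psi^{\mathbf{x}}_G$ is the Fourier transform of the kernel \eqref{eq:psiG} built on $A\otimes\mathcal{H}(\mathbf{x}_1)\otimes\dots\otimes\mathcal{H}(\mathbf{x}_n)$. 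Since the $\mathbf{x}_j$-dependence sits only in the rigid spatial translation $\alpha^V_{0,\mathbf{x}_j}$ applied to $\mathcal{H}(0)$, which in Fourier space is a phase, we have $\hat\Psi^{\mathbf{x}}_G(-P,P)=\big(\prod_{j=1}^n e^{i\mathbf{x}_j\cdot\mathbf{k}_j(P)}\big)\hat\Psi^0_G(-P,P)$ with $\mathbf{k}_j(P)=\sum_{l:\,r(l)=j}\mathbf{p}_l-\sum_{l:\,s(l)=j}\mathbf{p}_l$ the net spatial momentum at the $j$-th vertex and $\hat\Psi^0_G$ built from $A\otimes\mathcal{H}(0)^{\otimes n}$; no such integration is attached to the distinguished $A$-vertex $0$.

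The well-definedness of $\lim_{h\to1}L_h$ is, at bottom, the content of the adiabatic-limit estimates of \cite{FredenhagenLindner} (see also the appendix of \cite{DHP}): the exponential spatial decay of the free thermal connected functions $\omega^{\beta,c}$, combined with the fact that connectedness of $G$ links every vertex to the $A$-vertex, makes $(\mathbf{x}_1,\dots,\mathbf{x}_n)\mapsto\int dU\,\omega^{\beta,c}(A\otimes\alpha_{iu_1}\alpha^V_{0,\mathbf{x}_1}\mathcal{H}(0)\otimes\dots)$ integrable on $\mathbb{R}^{3n}$, so $L_h$ converges as $h\to1$. The new point, which is the core of the theorem, is to show that this limit takes the claimed form, namely that the $\mathbf{x}$-integrations may be carried out before the $P$-integration and that the resulting factor $\prod_{j=1}^n\hat h(\mathbf{k}_j(P))$ acts, as $h\to1$, as an approximate identity in the momenta $\mathbf{k}_j$, so as to restrict the $P$-integral to the subvariety $\{\mathbf{k}_j(P)=0:\,j=1,\dots,n\}$. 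Concretely, $\hat h(\mathbf{k})\to(2\pi)^3\delta^{(3)}(\mathbf{k})$ in $\mathcal{S}'(\mathbb{R}^3)$, and $\prod_{j=1}^n\hat h(\mathbf{k}_j(P))\to\prod_{j=1}^n(2\pi)^3\delta^{(3)}(\mathbf{k}_j(P))$, which is exactly conservation of the spatial momentum at every vertex other than the $A$-vertex.

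The step I expect to be the main obstacle is the control needed to justify the interchange of the $\mathbf{x}$- and $P$-integrations and the passage to the limit, uniformly enough in $h$. This is the spatial counterpart of the estimate of the term $A_G$ in the proof of Theorem~\ref{th:matsubara-frequency-conservation}. The difficulty is that $\hat\Psi^0_G$ only grows polynomially, $\tilde\Delta^\beta$ is merely tempered, and the factors $\hat h(\mathbf{k}_j(P))$ constrain only $n$ linear combinations of the $|E(G)|$ edge momenta -- the remaining loop momenta are left free. Decay in those directions must come from the microcausality of $A$ and of $K$, which forces $\hat\Psi^0_G(-P,P)$ to be of rapid decrease when every $\mathbf{p}_l$ is future directed (Proposition~9 of \cite{FredenhagenLindner}), together with the Bose-factor suppression of type $e^{-\beta\omega(\mathbf{p}_l)}$ carried by $\tilde\Delta^\beta$ via the KMS property and the on-shell factors $\delta(p_0\mp\omega)$. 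Putting these estimates together supplies the uniform control needed for the interchange and the limit. Finally, combining with Theorem~\ref{th:matsubara-frequency-conservation}, one concludes that each vertex $j=1,\dots,n$ of $G$ carries conservation of the full pair $(n_l,\mathbf{p}_l)$, i.e.\ of the Euclidean four-momentum, while no constraint arises at the $A$-vertex; everything beyond the estimates just described is bookkeeping within the graphical formalism of Section~\ref{sec:graph-matsubara}.
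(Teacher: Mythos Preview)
Your proposal is correct and lands on the same mechanism as the paper's proof, but the order of operations is reversed. You invoke Theorem~\ref{th:matsubara-frequency-conservation} first to carry out the $U$-integration and produce the Matsubara sums, and only afterwards deal with the spatial smearing via $\hat h(\mathbf{k})\to(2\pi)^3\delta^{(3)}(\mathbf{k})$. The paper instead keeps the $U$-integral unresolved inside the function $W(\mathbf{x}_1,\dots,\mathbf{x}_n)$, shows directly that $V(P):=\int_{\beta B_n}dU\,\prod_l\hat\Delta^\beta(u_{s(l)}-u_{r(l)},p_l)\,\hat\Phi_G(-P,P)$ is rapidly decreasing in every direction (reusing the $A_G$-type estimate from the proof of Theorem~\ref{th:matsubara-frequency-conservation}), and only after taking $h\to1$ and computing the oscillatory $\mathbf{X}$-integral does it expand in Matsubara frequencies. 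Your $I_G(P)\hat\Psi^0_G(-P,P)$ is exactly the paper's $V(P)$, so the decisive estimate is identical; your route is more modular (Theorem~\ref{th:matsubara-frequency-conservation} as a black box, well-definedness deferred to \cite{FredenhagenLindner,DHP}), while the paper's is self-contained.

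One small imprecision: you attribute the Bose-factor suppression $e^{-\beta\omega(\mathbf{p}_l)}$ to the individual Matsubara modes $\tilde\Delta^\beta(n_l,p_l)$, but those modes carry no such exponential damping. The damping lives in the \emph{resummed} propagator $\hat\Delta^\beta(u,p)$, i.e.\ in $V(P)$ before the Matsubara expansion; this is exactly why the paper postpones the Matsubara expansion until after the rapid-decrease estimate. In your ordering you can still get the estimate, but you should argue it for the full sum $I_G(P)$ (which equals the $U$-integral) rather than term by term.
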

\begin{proof}
As discussed above, in the limit $h\to 1$,
$\mathcal{H}(\mathbf{x})$ is of compact support and it is an element of $\mathcal{F}$. Furthermore, $\alpha^V_{0,\mathbf{x}}\mathcal{H}(\mathbf{y})$ converges to $\alpha_{0,\mathbf{x}}\mathcal{H}(\mathbf{y})$ as $h\to 1$. Therefore we can expand $L_{h,G}$ as follows
\[
L_{h,G}=\int d\mathbf{x}_1\dots d\mathbf{x}_n h(\mathbf{x}_1)\dots h(\mathbf{x}_n) W(\mathbf{x}_1,\dots,\mathbf{x}_n)
\]
where
\begin{gather*}
W(\mathbf{x}_1,\dots,\mathbf{x}_n) = \int_{\beta B_n} dU \int dP   
\left(\prod_{e=1}^n  
\exp{\left(
i\mathbf{x}_e \cdot 
\left(
\sum_{l'\in E(G), s(l')=e} \mathbf{p}_{l'}  
-\sum_{{l''}\in E(G), r(l'')=e} \mathbf{p}_{l''}
\right) 
\right) } 
\right)\;
\\
\cdot\left(\prod_{l\in E(G)} \;   \hat\Delta^\beta(u_{s(l)}-u_{r(l)},p_l)\right)
\hat{\Phi}_G(-P,P)
\end{gather*}
and now $\hat\Phi_G$ is the Fourier transform of the compactly supported distribution $\Phi_G$ whose integral Kernel is of the form
\[
\Phi_G(X,Y) :=  \left.\left( \prod_{l\in E(G)} \frac{\delta^2}{\delta\varphi_{s(l)}(x_{l})\delta\varphi_{r(l)}(y_{l})} \right)  A\otimes \mathcal{H}(0)\otimes  \dots \otimes \mathcal{H}(0) \right|_{(\varphi_0,\dots ,\varphi_n)=0}.
\]
The function $W$ is an integrable function over $\mathbb{R}^{3n}$.  To prove it we start mollifying the delta functions in the propagators -- \textit{cf.} equation \eqref{eq:thermal-2pt}.
Since the mollified delta functions $\delta_\epsilon$ converge to the Dirac distribution $\delta$ in the distributional sense, we can consider the limit $\delta_\epsilon\to\delta$ in the last step. 
Since $A$ and $\mathcal{H}(0)$ are of compact support, we can prove as in the proof of Theorem \ref{th:matsubara-frequency-conservation} that  $\Phi(-P,P) \prod_l \hat\Delta^\beta(u_{s(l)}-u_{r(l)},p_l) $ is uniformly bounded in $U$ by a rapidly decreasing function. This implies that the integral in $U$ (which is done over a compact support) can be taken before the integral in $P$.  

\bigskip \noindent
The resulting function 
\[
V(P) := \int_{\beta B_n} dU \left( \prod_{l\in E(G)} \hat\Delta^\beta(u_{s(l)}-u_{r(l)},p_l) \right)\Phi(-P,P)
\] is rapidly decreasing in every direction.
This implies that the limit $h\to1$ in $L_{h,G}=\langle W,h\otimes\dots \otimes  h\rangle$ can be taken before the integral over $\mathbf{X}=(\mathbf{x}_1,\dots ,\mathbf{x}_n)$. 
\noindent
We finally need to evaluate the following oscillatory integral obtained as $L_G=\lim_{h\to 1} L_{h,G}$
\begin{align*}
L_G&=\int_{\mathbb{R}^{3n}} d\mathbf{X} W(\mathbf{X})  \\
&= \int_{\mathbb{R}^{3n}} d\mathbf{X} \int dP \left(\prod_{e=1}^n  
\exp{\left(
i\mathbf{x}_e \cdot 
\left(
\sum_{l'\in E(G), s(l')=e} \mathbf{p}_{l'}  
-\sum_{{l''}\in E(G), r(l'')=e} \mathbf{p}_{l''}
\right)
\right) } 
\right)\; V(P).  
\end{align*}
Since $V(P)$ is of rapid decrease we have that 
\[
L_G= (2\pi)^{3n}\int dP \left(\prod_{e=1}^n  
\delta{
\left(
\sum_{l'\in E(G), s(l')=e} \mathbf{p}_{l'}  
-\sum_{{l''}\in E(G), r(l'')=e} \mathbf{p}_{l''}
\right) } 
\right)\; V(P).
\]
The final step is to recall that, as in Theorem \ref{th:matsubara-frequency-conservation},  $V(P)$ can be expanded as a sum over Matsubara frequencies, hence, 
\begin{gather*}
L_{G} = \frac{(2\pi)^{3n}}{\beta^{|E(G)|}}\int dP   
\left(\prod_{e=1}^n  
\delta { 
\left(
\sum_{l'\in E(G), s(l')=e} \mathbf{p}_{l'}  
-\sum_{{l''}\in E(G), r(l'')=e} \mathbf{p}_{l''}
\right) } 
\right)\;
\\
\cdot\sum_{N} 
{ \left( \prod_{j=1}^n  \delta\left(\sum_{l'\in E(G), s(l')=j} n_{l'}  -\sum_{{l''}\in E(G), r(l'')=j} n_{l''}\right)      \right) }
\cdot \left(\prod_{l\in E(G)} \;
 \tilde\Delta^\beta(n_l,p_l) 
 \right) \hat{\Phi}_G(-P,P)
\end{gather*}
If follows that the limit $h\to 1$ ensures momentum conservation in every vertex of the graph $G$.
Actually, the limit vanishes unless the sum of incoming / outgoing momenta vanishes.
\end{proof}

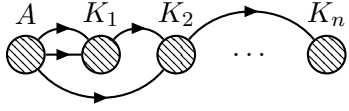
\begin{figure}
\centering
		\begin{tikzpicture}[baseline={([yshift=-.5ex]current bounding box.center)},thick]
			\begin{feynman}[inline=(d.base)]
				\vertex [blob] (A) {};
				\vertex [right=of A, blob] (k1) {};
				\vertex [right=of k1, blob] (k2) {};
				\vertex [right=of k2] (kd) {\dots};
				\vertex [right=of kd, blob] (kn) {};
				\vertex [above=1.4em of k1] {\({K_1}\)};
				\vertex [above=1.4em of A] {\(A\)};
				\vertex [above=1.4em of k2] {\(K_2\)};
				\vertex [above=1.4em of kn] {\(K_n\)};
								
				\diagram* { 
					(A) -- [fermion, bend left=50] (k1);
					(A) -- [fermion] (k1);
					(A) -- [fermion, bend right=50] (k2);
					(k1) -- [fermion, bend left=50] (k2)
					(k2) -- [fermion, bend left=50] (kn)
				};
			\end{feynman} 
		\end{tikzpicture} 
\caption{This figure contains a graph appearing in the expansion of $\omega^{\beta,V}(A)$. $K$ correspond to the interacting Hamiltonian and the  arrows correspond to the propagators given in \eqref{eq:thermal-propagator}. According to Theorem \ref{th:momentum-conservation} and Theorem \ref{th:matsubara-frequency-conservation}, at each vertex there is the conservation of the spatial momentum and the Matsubara frequencies.}
\label{Fig: graph appearing in the expansion of correlation functions}
\end{figure}

We conclude this section by comparing the obtained graphical expansion with the Matsubara formalism.
First of all we notice that in the limit where $\chi(t)$ tends to an Heaviside step function $\Theta(t)$, the support of $\mathcal{H}(0)$ tends to be the point $(0,0)$.
This limit is in general singular, however, we observe that, at least formally, in this situation the propagators $\Delta^\beta(t+iu)$  joining the various $\mathcal{H}(0)$ tend to the ordinary Matsubara propagators \eqref{eq:matsubara-propagator}. 
The time dependence survives only in the external lines and if we evaluate also those fields at $t=0$ we obtain the following (formal) graphical expansion:
\[
\int_{\beta S_n} dU \omega^{\beta,c}(A\otimes \alpha_{iu_1}K\otimes\dots\otimes \alpha_{iu_n}K)\,.
\]
Therefore, at least formally, the expansion found in proposition \ref{pr:Fsymmetric} and theorem \ref{th:matsubara-frequency-conservation} coincides with the usual Matsubara expansion -- \textit{cf.} \cite{LeBellac} -- in the limit where $h\to1$ and $\chi \to \theta$.

\subsection{Computational rules - Graphical expansion of the correlation functions}\label{sec:graph-exp}

In this section we summarise the computational rules necessary to evaluate the $n-$th order contribution to the correlation functions 
\[
G(x_1,\dots, x_k)=\omega^{\beta,V}(R_V(T(\Phi(x_1),\dots ,\Phi(x_k)))\,,
\]
in the state $\omega^{\beta,V}$  \eqref{Eqn: cluster expansion of interacting KMS state} where the interaction Lagrangian is $\mathcal{L}_I=\lambda\phi^l$.  These rules follows from the analysis performed in the previous section and replace both the ordinary Feynman rules for the thermal field theory and the rules used in the  ordinary Matsubara computations. 
In particular, we shall use an expansion like the one given in \eqref{eq:Lh} where the interaction Hamiltonians $K$ are replaced by their spatial densities $\mathcal{H}(0)$.
Furthermore proposition \ref{pr:real-time-formalism} is used to evaluate the action of the Bogolibov map $R_V$, while Theorem  \ref{th:matsubara-frequency-conservation} allows to compute the integral over the imaginary times.
Finally Theorem \ref{th:momentum-conservation} gives the conservation of the spatial momentum over each vertex representing the interaction Hamiltonian.

The contribution $\hat{G}^{[n]}(t_1,\mathbf{p_1};\dots ; t_k,\mathbf{p_k})$ at order $n$ in perturbation theory to the spatial Fourier transform
of $G(x_1,\dots, x_k)$ is given by the sum of all contributions obtained with the following rules valid when $\mathcal{L}_I$ is a of the form  $\lambda \phi^l$ -- \textit{cf.} figure \ref{Fig: graph appearing in the expansion of correlation functions}.

\begin{enumerate}
\item  Add internal vertices for $A=R_V(T\Phi(t_1,\mathbf{p}_1), \dots ,\Phi(t_1,\mathbf{p}_k))$ and for each factor $\mathcal{H}(0)= R_V\left(\int dt \dot{\chi}(t)\mathcal{L}_I(t,0)\right)$. 
The number of $\mathcal{H}(0)$ factors plus the number of internal vertices must be equal the order $n$ of perturbation.
Moreover, there are two types of internal vertices corresponding to the two-components of the propagator \eqref{eq:real-time-propagator}.
Notice that the external vertices and those corresponding to $\mathcal{L}_I$ are always of type $1$.
\item  Join the vertices in $A$ and in every $\mathcal{H}$ with real time propagators given in \eqref{eq:real-time-propagator}. Only connected graphs are allowed.
\item  Use the thermal propagator expanded over the Matsubara frequencies \eqref{eq:thermal-propagator} with $u=0$  to join the vertices of $A$ with the one of the factors  $\mathcal{H}$, in such a way that at every internal vertex the number of lines correspond to the order of $\mathcal{L}_I=\lambda\phi^l$. 
Again only connected graphs among $A$ and the various $\mathcal{H}(0)$ are allowed.
\item  Impose spatial momentum conservation in every internal vertex and the vanishing of the sum of Matsubara frequencies over the factors $A$, $\mathcal{H}(0)$.
\item  Add the appropriate numerical factors to every graph.
\item  Perform the integrations over times (taking into account $\chi$, $\dot{\chi}$).
\end{enumerate}
The last step in the list is the most complicated one because it involves integrations over times and, furthermore, it formally depends on the form of the $\dot{\chi}$. 
It is however important to notice that although $\chi$ appears in this analysis we have that the final result does not depend on $\chi$ if all the points $x_i$ are in the region where $\chi=1$. This observation can be used to simplify some of these integrals.

In particular, we observe that, in the limit $j\to1$, $\chi_j(t) = \chi(t/j)$ tends to $1$ while ${\dot\chi}_{j}\to 0$ uniformly and in the Fourier domain $\hat{\dot\chi}_{j}$ tends to $0$ pointwise. 
As we shall see in section \ref{sec:practical}, with this observation many contributions cancel in the final expressions.

Finally, it is interesting to notice that, in order to evaluate
\begin{align*}
	G(x_1,\ldots,x_n)=\omega^{\beta,V}\left(R_V(T(\phi(x_1),\dots ,\phi(x_n)))\right)\,,
\end{align*}
the rules presented above combine both the Feynman rules proper of the real time formalism and the Matsubara rules.
In particular real time formalism is needed to evaluate the effect of $R_V(T(\phi(x_1),\dots ,\phi(x_n)))$, while the Matsubara formalism ensures that $U(i\beta)$ is correctly taken into account.

If the interaction Lagrangian in $(\mathcal{F},\star_{\Delta_+^\beta})$ results in a polynomial in $\phi$,
we must add as many types of internal vertices as the number of homogeneous monomials forming $\mathcal{L}_I$. 
Notice that this is not an uncommon situation: Actually, in a $\lambda \phi^4-$theory, in view of the form of \eqref{eq:map-monomials}, the interaction Lagrangian when represented over $(\mathcal{F},\star_{\Delta_+^\beta})$ is such that 
\begin{equation}\label{eq:phi4-rep}
r_{\Delta_+^\beta}\left(\frac{\Phi^4}{4!}\right) = \frac{\phi^4}{4!} + m_\beta^2 \frac{\phi^2}{2} 
\end{equation} 
where $m_\beta$ is the known thermal mass.
As an example, figure \ref{fig:diagrams-phi4} contains all the diagrams which contribute to $G(x_1,x_2)=\omega^{\beta,V}(R_V(T(\phi(x_1),\phi(x_2)))$ at second order in $\lambda$ where one vertex is $\phi^4$ and the other  is $m_\beta^2 \phi^2$. 

\begin{figure} 
\centering
\begin{tikzpicture}
	\begin{feynman}[inline=(d.base)]
				\vertex (x);
				\vertex [right=of x] (z1);
				\vertex [above=of z1] (z2);
				\vertex [right=of z1] (y);
				\vertex [above =.1em of z2] (y1) {\(\chi\)};
				\vertex [below =.1em of z1] (y2) {\(\chi\)};
				\diagram { 
					(x) -- (z1);
					(z1) -- [quarter left] (z2);
					(z1) -- [quarter right] (z2);
					(z1) --  (y);
				};
			\end{feynman}			
\end{tikzpicture}
\quad
\begin{tikzpicture}
	\begin{feynman}[inline=(d.base)]
				\vertex (x);
				\vertex [right=of x] (z1);
				\vertex [above=of z1] (z2);
				\vertex [right=of z1] (y);
				\vertex [above =.1em of z2] (y2) {\(\dot\chi\)};
				\vertex [below =.1em of z1] (y1) {\(\chi\)};
				\diagram { 
					(x) -- (z1);
					(z1) -- [scalar,quarter left] (z2);
					(z1) -- [scalar,quarter right] (z2);
					(z1) --  (y);
				};
			\end{feynman}			
\end{tikzpicture}
\quad
\begin{tikzpicture}
	\begin{feynman}[inline=(d.base)]
				\vertex (x);
				\vertex [right=of x] (z1);
				\vertex [above=of z1] (z2);
				\vertex [right=of z1] (y);
				\vertex [above =.1em of z2] (y2) {\(\chi\)};
				\vertex [below =.1em of z1] (y1) {\(\dot\chi\)};
				\diagram { 
					(x) -- [scalar] (z1);
					(z1) -- [quarter left] (z2);
					(z1) -- [quarter right] (z2);
					(z1) -- [scalar] (y);
				};
			\end{feynman}			
\end{tikzpicture}
\quad
\begin{tikzpicture}
	\begin{feynman}[inline=(d.base)]
				\vertex (x);
				\vertex [right=of x] (z1);
				\vertex [above=of z1] (z2);
				\vertex [right=of z1] (y);
				\vertex [above =.1em of z2] (y2) {\(\dot\chi\)};
				\vertex [below =.1em of z1] (y1) {\(\chi\)};
				\diagram { 
					(x) -- [scalar](z1);
					(z1) -- [quarter left] (z2);
					(z1) -- [quarter right] (z2);
					(z1) -- [scalar] (y);
				};
			\end{feynman}			
\end{tikzpicture}
\quad
\begin{tikzpicture}
	\begin{feynman}[inline=(d.base)]
				\vertex (x);
				\vertex [right=of x] (z1);
				\vertex [above=of z1] (z2);
				\vertex [right=of z1] (y);
				\vertex [above =.1em of z2] (y2) {\(\dot\chi\)};
				\vertex [below =.1em of z1] (y1) {\(\dot\chi\)};
				\diagram { 
					(x) -- [scalar] (z1);
					(z1) -- [scalar,quarter left] (z2);
					(z1) -- [scalar,quarter right] (z2);
					(z1) -- [scalar] (y);
				};
			\end{feynman}			
\end{tikzpicture}
\quad
\begin{tikzpicture}
	\begin{feynman}[inline=(d.base)]
				\vertex (x);
				\vertex [right=of x] (z1);
				\vertex [above=of z1] (z2);
				\vertex [right=of z1] (y);
				\vertex [above =.1em of z2] (y2) {\(\dot\chi\)};
				\vertex [below =.1em of z1] (y1) {\(\dot\chi\)};
				\diagram { 
					(x) -- [scalar] (z1);
					(z1) -- [scalar,quarter left] (z2);
					(z1) -- [scalar,quarter right] (z2);
					(z1) -- [scalar] (y);
				};
			\end{feynman}			
\end{tikzpicture}
\caption{This figure contains the graphs appearing in the graphical expansion of $G(x_1,x_2)$ at second order in $\lambda$ in a $\lambda\Phi^4$ where one internal vertex correspond to $m_\beta^2 \phi^2$ and the other vertex to $\phi^4$.
The dotted lines correspond to the thermal propagator expanded over the Matsubara frequencies \eqref{eq:thermal-propagator} the other lines to the propagator in \eqref{eq:real-time-propagator}. In this case, there is conservation of the Matsubara frequencies at every vertex reached by dotted lines. 
} 
\label{fig:diagrams-phi4}
\end{figure}
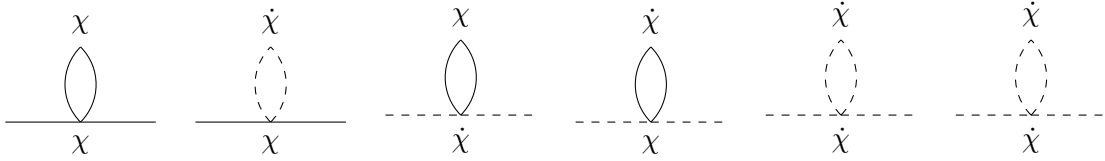

\section{Practical computations in perturbation theory}\label{sec:practical}

In this section we would like to compare some correlation functions in the state $\omega^\beta$ with those obtained in $\omega^{\beta,V}$.
As discussed in the introduction, we shall see that in the large time limit it is necessary to take into account the effect of $U(i\beta)$ present in the state also when the considered observable is very far from the region where the interaction is switched on.
We shall compute the correction to the time-ordered propagator of the theory in two cases, namely when the interaction Lagrangian is quadratic and when it is cubic. 

\subsection{Evaluation of the time-ordered propagator at first order for quadratic interaction}\label{sec:correction-phi2}

In this subsection we consider the case of a quadratic interaction Lagrangian, hence we are interested in evaluating $\omega^{\beta,\lambda V}(R_V(O))$ where
\[
V= \int \chi h \frac{\phi^2}{2}  d\mu\,, \qquad O = T(\Phi(x_1),\Phi(x_2))\,.
\] 
When $x_1=x_2$ this expectation value is also related with the self-energy of a $g\phi^4$ theory -- \textit{cf.} \cite{LeBellac}.
We then compare the expectation values of $R_{\lambda V}(O)$ computed in $\omega^{\beta}$ and in $\omega^{\beta,\lambda V}$,
\[
F:=\omega^{\beta,\lambda V}(R_{\lambda V}(O)),\qquad A :=\omega^{\beta}(R_{\lambda V}(O)), \qquad B:=F-A.
\]
We want to prove that $\lim_{t\to\infty} \lim_{h\to1} B(x_1+te_0,x_2+te_0)$ is not zero at first order in perturbation theory.
Notice that at first order in perturbation theory we have to compute
\begin{equation*}
	B^{[1]}(x_1,x_2) = -
	\int_0^\beta du \;   \omega_\beta^c\left( \Phi(x_1)\Phi(x_2) \otimes \alpha_{iu}\dot{V} \right)=
	\begin{aligned} 
		-\int_0^\beta du 
		\begin{tikzpicture}
			[baseline={([yshift=-.5ex]current bounding box.center)},thick]
			\begin{feynman}[inline=(d.base)]
				\vertex (z);
				\vertex [above left=of z] (x) {\(\Phi(x_1)\)};
				\vertex [above right=of z] (y) {\(\Phi(x_2)\)};
				\vertex [below=0.3em of z] {\(\alpha_{iu} \dot{V}\)};
				
				\diagram* { 
					(x) -- [fermion] (z);
					(y) -- [fermion] (z);
				};
			\end{feynman} 
		\end{tikzpicture}
	\end{aligned}\,,
\end{equation*}
where the arrows denotes the propagator $\Delta^{\beta}_+$. 
In the limit $h\to 1$ and with fixed $\chi$ we obtain an expression invariant under space translation. 
We can thus compute the spatial Fourier transform\footnote{In the following we shall use the following convention regarding the Fourier transform: $B(\mathbf{x})=\frac{1}{(2\pi)^3}\int {d}^3\mathbf{p}\hat{B}(\mathbf{p})^{-i\mathbf{p}\cdot\mathbf{x}}$, $\hat{B}(\mathbf{k})=\int B(\mathbf{x})e^{i\mathbf{p}\cdot\mathbf{x}}$.}, namely the Fourier transform with respect to $\mathbf{x}-\mathbf{y}$. 
We get
\begin{gather*}
\hat{B}^{[1]}\left(t_1,t_2,\mathbf{p}\right) 
= \\
\frac{-2}{(2\pi)^3}
\int_0^\beta du \int dt_3\int dp_0^1dp_0^2 \;  \dot{\chi}(t_3) e^{ ip_0^1 (t_3-t_1)} e^{ ip_0^2 (t_3-t_2)}   e^{-u (p_0^1+p_0^2)} 
\hat\Delta^\beta_+(p_0^1,\mathbf{p})\hat\Delta^\beta_+(p_0^2,\mathbf{p})
\end{gather*}
where $\hat{\Delta}_+(p_0,\mathbf{p})$ is the Fourier transform of the thermal two-point function given in \eqref{eq:thermal-2pt}.
After changing the variable $t=\frac{t_1+t_2}{2}$ and $\delta t = \frac{t_1-t_2}{2}$, in view of the form of
$\hat\Delta^\beta_+$ we can take all the integrals to obtain
\begin{align}\label{eq:Bft}
\hat{B}^{[1]}(t,\delta t,\mathbf{p}) 
&= 
\frac{-2}{(2\pi)^3}\frac{b(w)^2}{4w^2}     
\left(
\left( \hat{\dot{\chi}}(2w) e^{ -i 2w t} + \hat{\dot{\chi}}(-2w)e^{ i 2w t}\right) \frac{1-e^{-2\beta w}}{2w} 
+\beta e^{-\beta w} e^{ -i 2w \delta t}
\right)
\end{align}
where $w=\sqrt{|\mathbf{p}|^2+m^2}$ and where $\hat{\dot\chi}$ denotes the Fourier transform of $\dot{\chi}$.
Consider now 
\[
\mathcal{B}(f; t,\delta t)= \int_{\mathbb{R}^3} d^3\mathbf{p}   \hat{B}^{[1]}(t,\delta t,\mathbf{p}) \hat{f}(\mathbf{p}) 
\]
where $f$ is a compactly supported smooth function.
In the limit $t\to\infty$, due to the Riemann Lebesgue lemma the oscillating factors disappears and we get
\[
\mathcal{B}_\infty(f; \delta t) :=\lim_{t\to\infty} \mathcal{B}(f; t,\delta t) =  
-\frac{2}{(2\pi)^3}\int_{\mathbb{R}^3} d^3\mathbf{p}    \frac{b(w)^2}{4w^2}  \beta e^{-\beta w} e^{ -i 2w \delta t} \hat{f}(\mathbf{p}) 
\]
In particular, the integral kernel of $\mathcal{B}$ seen as operator on $f$ is such that
\[
\tilde{\mathcal{B}}_\infty (\mathbf{x}, \delta t) = 
-\frac{2}{(2\pi)^3}\int_{\mathbb{R}^3} d^3\mathbf{p}    \frac{b(w)^2}{4w^2}  \beta e^{-\beta w} e^{ -i2 w \delta t} e^{ -i \mathbf{p}\cdot\mathbf{x}}\,.
\]
Evaluation at $\mathbf{x}=0$ and $\delta t=0$ leads to
\[
\tilde{\mathcal{B}}_\infty (0,0) = 
-\frac{2}{(2\pi)^3}\int_{\mathbb{R}^3} d^3\mathbf{p}    \frac{b(w)^2}{4w^2}  \beta e^{-\beta w}   
\]
which is a strictly positive quantity because the function which is integrated is smooth and positive on the domain of integration.
Furthermore, it is strictly positive in $\mathbf{p}=0$.

\bigskip
We proceed now analysing
\[
\hat{F}^{[1]}(t,0,\mathbf{p}) = \hat{A}^{[1]}(t,0,\mathbf{p})+\hat{B}^{[1]}(t,0,\mathbf{p}) 
\]
where $A^{[1]} = -i \omega(T(V, O)) +i \omega(V\star O)$ is the contribution of the first two terms and $B^{[1]}=-   \int_0^\beta du  \; \omega^{\beta,c}(O\otimes \dot{V}_{i u})$ was already discussed above.
We have 
\[
\hat{A}^{[1]}(t,0,\mathbf{p}) =
\frac{-i}{(2\pi)^3}
\int dt_y  \theta(t-t_y) \left(\hat{\Delta}_-^2(t_y-t,\mathbf{p})-\hat{\Delta}_+^2(t_y-t,\mathbf{p})\right) \chi(t_y).
\]
Up to renormalization 
\[
\left((\Delta^\beta_-)^2- (\Delta^\beta_+)^2 \right)=
\left(\Delta_-- \Delta_+ \right) \left(\Delta_-+ \Delta_++2W \right)\,,
\]
hence
\[
\hat{A}^{[1]}(t,0,\mathbf{p}) = \frac{-i}{(2\pi)^3} \int_{-\infty}^{t} dt_y \chi(t_y) \frac{1}{4 w^2} \left( e^{i2 w(t_y-t)}-e^{-i2 w(t_y-t)}\right)
(1+2b_-)
\]
where $w=\sqrt{|\mathbf{p}|^2+m^2}$, $b(w) = b_+(w) =(1-e^{-\beta w})^{-1}$ and $b_-(w) = (e^{\beta w}-1)^{-1}$.
Writing $e^{i2w t }$ as $\frac{-i}{2w}\frac{d}{dt}e^{i2wt}$ and integrating by parts, assuming $t>0$ we find
\[
\hat{A}^{[1]}(t,0,\mathbf{p}) = - \frac{1}{(2\pi)^3} \frac{1}{4 w^3} (1+2b_-)
+
\frac{1}{(2\pi)^3} \left(\hat{\dot{\chi}}(2w)e^{-i2 w t}+\hat{\dot{\chi}}(-2w)e^{i2 w t}\right) 
\frac{1}{8 w^3} (1+2b_-)\,.
\]
Notice that the contribution $-\frac{1}{(2\pi)^3} \frac{1}{4 w^3}$ is logarithmically divergent when integrated over $d\mathbf{p}^3$.
This divergence is not present if renormalization is properly taken into account (see \eqref{eq:loop-ren} below) or if $B$ is evaluated at $\mathbf{x}_1-\mathbf{x}_2\neq 0$. 
Recalling the form of $B$ given in \eqref{eq:Bft}
\begin{align*}
\hat{B}^{[1]}(t,0,\mathbf{p}) 
&=-  \frac{1}{(2\pi)^3} \frac{b_+b_-}{2w^2} - \frac{1}{(2\pi)^3} \frac{(b_++b_-)}{8 w^3} 
\left(\hat{\dot{\chi}}(2w)e^{-i2 w t}+\hat{\dot{\chi}}(-2w)e^{i2 w t}\right)\,.
\end{align*}
We observe that both $A$ and $B$ depend on the cut-off function $\chi$, however, as expected, $\hat{F}=A+B$ does not depend on $\chi$
\begin{equation}\label{eq:F2-2pt}
\hat{F}^{[1]}(t,0,\mathbf{p})
 = 
 -  \frac{1}{(2\pi)^3} \left(\frac{b_+b_-}{2w^2}
+ \frac{b_++b_-}{4 w^3} \right).
\end{equation}
We now compare this result with the expectation value of $O$ computed in a state of a free field of mass $\tilde{m}=\sqrt{m^2+\lambda}$.
The latter is $\omega_\la(x_1-x_2):=\omega^{\beta}(O)=\Delta^{\beta,\tilde{m}}_+(x_1-x_2)$ if $x_1 \gtrsim x_2$  where $\Delta^{\beta,\tilde{m}}_+$ is given in \eqref{eq:2pt-thermal-m}. 
Up to first order in $\lambda$, we have that  $w_{\lambda}^{2} = w^2+\lambda = |\mathbf{p}|^2+m^2+\lambda$ and hence
\[
w_\lambda \approx  w + \frac{\lambda}{2 w}, \qquad
\frac{1}{w_\lambda} \approx  \frac{1}{w} - \frac{\lambda}{2 w^3}, \qquad   
b_\pm(w_\lambda) \approx b_\pm(w)-\frac{\lambda}{2 w} b_-b_+ 
\]
hence
\[
\hat\omega_\lambda (t,0,\mathbf{p}) \approx \frac{1}{(2\pi)^3}  \frac{b_+(w_\lambda) +b_-(w_\lambda)}{2w_\lambda} = \frac{1}{(2\pi)^3} \frac{b_+ +b_-}{2w} - \lambda  \frac{1}{(2\pi)^3} \left( \frac{b_+b_-}{2w^2} + \frac{b_++b_-}{4w^3}   \right)
\]
we see that at first order in $\lambda$ $\hat{\omega}_\lambda$ coincides with $\hat{F}^{[1]}$.

\bigskip
We conclude this section observing that in $\hat{F}^{[1]}$ survives a contribution from $B$ -- which is the result one would obtain employing the imaginary time formalism -- and a contribution from $A$, which comes from the real time formalism.
Hence, there are cases where it is necessary to take into account the full form of the state and the real time formalism alone does not furnish complete results. 

\bigskip
Notice that similar corrections are necessary also in a $\lambda \Phi^4$ theory. 
Actually, as discussed in \eqref{eq:phi4-rep} the representation of the interaction potential in $\Fmuc$ is such that
\[
\mathcal{L}_I=\lambda \frac{r_{\Delta_+^\beta}(\Phi^4)}{4!}=\lambda \frac{\phi^4}{4!}  + \lambda m_\beta^2 \frac{\phi^2}{2} 
\]
where
\[
m_\beta^2 = \lim_{x\to0} \left( \Delta_+^{\beta}(x)-\Delta_+^{\infty}(x) \right) = \frac{1}{(2\pi)^3}\int d^3 \mathbf{p} \frac{b_-}{w} 
\]
is the known thermal mass. The presence of a second order contribution in $V$ implies that there are correction to the self-energy similar to the correction of $F$ discussed above.
In particular, the contribution due to the first two graphs depicted in Figure \ref{fig:diagrams-phi4} to the self-energy corresponds to the integral over $\mathbf{p}\in\mathbb{R}^3$ of $\hat{F}^{[1]}$ given in \eqref{eq:F2-2pt} multiplied by $m_\beta^2$. 

\subsection{Evaluation of the time-ordered propagator at second order for cubic interaction}
In this section we show that the second order corrections to the time-ordered product of two interacting fields in a $\lambda\phi^3$ theory induced by $U(i\beta)$ are in general not vanishing if the adiabatic limit and the large time limit are considered. 

To this end we choose
\[
V= \int \chi h \frac{\phi^3}{3!}  d\mu_x, \qquad O = T(\Phi(x_1),\Phi(x_2))\,,
\] 
and we consider
\[
F=\omega^{\beta,\lambda V}(R_{\lambda V}(\alpha_tO)) -\omega^{\beta}(R_{\lambda V}(\alpha_tO))\,. 
\]
We shall see that $\lim_{t\to\infty} \lim_{h\to1} F$ is not zero.

To this end, we first of all notice that $F$ vanishes at order $0$ and $1$ in $\lambda$.
Furthermore, the second order contributions to $F$ are the following terms
\begin{align*}
A &= 
i\int_0^\beta du_1    \omega_\beta^c\left( T(V, \Phi(x_1)\Phi(x_2))  - V\star  \Phi(x_1)\Phi(x_2)   \otimes \alpha_{iu_1}\dot{V} \right)  
\\
B &= 
i\int_0^\beta du_1    \omega_\beta^c\left( \Phi(x_1)\Phi(x_2)   \otimes \alpha_{iu_1} \left(T (V,\dot{V})-V\star \dot{V}\right)\right)  \\
C&= \int_0^\beta du_2\int_0^{u_2} du_1    \omega_\beta^c\left( \Phi(x_1)\Phi(x_2)   \otimes \alpha_{iu_1}\dot{V}\otimes \alpha_{iu_2}\dot{V}\right)
\end{align*}
They have the following graphical expansion in the limit where $h\to1$ and up to a symmetrization in $t_1$,$t_2$

		\begin{equation*}
		\begin{aligned}
		& A =   
		 i \int_0^\beta du \left(
		\begin{tikzpicture}[baseline={([yshift=-.5ex]current bounding box.center)},thick]
			\begin{feynman}[inline=(d.base)]
				\vertex (z1);
				\vertex [right=of z1] (z2);
				\vertex [above left=of z1] (x) {\(\phi\)};
				\vertex [above right=of z2] (y) {\(\phi\)};
				\vertex [left=.1em of z1] {\(V\)};
				\vertex [right=.1em of z2] {\(\alpha_{iu}\dot{V}\)};
				
				\diagram* { 
					(x) -- (z1);
					(z1) -- [fermion, half left](z2);
					(z1) -- [fermion, half right] (z2);
					(y) -- [fermion] (z2);
				};
			\end{feynman}
		\end{tikzpicture} -
		\begin{tikzpicture}[baseline={([yshift=-.5ex]current bounding box.center)},thick]
			\begin{feynman}[inline=(d.base)]
				\vertex (z1);
				\vertex [right=of z1] (z2);
				\vertex [above left=of z1] (x) {\(\phi\)};
				\vertex [above right=of z2] (y) {\(\phi\)};
				\vertex [left=.1em of z1] {\(V\)};
				\vertex [right=.1em of z2] {\(\alpha_{iu}\dot{V}\)};
				
				\diagram* { 
					(x) -- [anti fermion] (z1);
					(z1) -- [fermion, half left](z2);
					(z1) -- [fermion, half right] (z2);
					(y) -- [fermion] (z2);
				};
			\end{feynman} 
		\end{tikzpicture} \right)  \\
		B &=  i  \int_0^\beta du  \left(
		\begin{tikzpicture}[baseline={([yshift=-.5ex]current bounding box.center)},thick]
			\begin{feynman}[inline=(d.base)]
				\vertex (z1);
				\vertex [right=of z1] (z2);
				\vertex [above left=of z1] (x) {\(\phi\)};
				\vertex [above right=of z2] (y) {\(\phi\)};
				\vertex [left=.1em of z1] {\(\alpha_{iu} V\)};
				\vertex [right=.1em of z2] {\(\alpha_{iu}\dot{V}\)};
				
				\diagram* { 
					(x) -- [fermion] (z1);
					(z1) -- [half left](z2);
					(z1) -- [half right] (z2);
					(y) -- [fermion] (z2);
				};
			\end{feynman} 
		\end{tikzpicture} -
		\begin{tikzpicture}[baseline={([yshift=-.5ex]current bounding box.center)},thick]
			\begin{feynman}[inline=(d.base)]
				\vertex (z1);
				\vertex [right=of z1] (z2);
				\vertex [above left=of z1] (x) {\(\phi\)};
				\vertex [above right=of z2] (y) {\(\phi\)};
				\vertex [left=.1em of z1] {\(\alpha_{iu} V\)};
				\vertex [right=.1em of z2] {\(\alpha_{iu}\dot{V}\)};
				
				\diagram* { 
					(x) -- [fermion] (z1);
					(z1) -- [fermion, half left](z2);
					(z1) -- [fermion, half right] (z2);
					(y) -- [fermion] (z2);
				};
			\end{feynman} 
		\end{tikzpicture} \right) \\
		C &= 
		\int_0^\beta du_2 \int_0^{u_2} du_1 
		\begin{tikzpicture}[baseline={([yshift=-.5ex]current bounding box.center)},thick]
			\begin{feynman}[inline=(d.base)]
				\vertex (z1);
				\vertex [right=of z1] (z2);
				\vertex [above left=of z1] (x) {\(\phi\)};
				\vertex [above right=of z2] (y) {\(\phi\)};
				\vertex [left=.1em of z1] {\(\alpha_{iu_1} \dot{V}\)};
				\vertex [right=.1em of z2] {\(\alpha_{iu_2}\dot{V}\)};
				
				\diagram* { 
					(x) -- [fermion] (z1);
					(z1) -- [fermion, half left](z2);
					(z1) -- [fermion, half right] (z2);
					(y) -- [fermion] (z2);
				};
			\end{feynman} 
		\end{tikzpicture}.
		\end{aligned}
	\end{equation*}
	where the plane lines correspond to $\Delta^\beta_F$ and the lines with arrows to $\Delta^\beta_+$.  
We denote by $t_3$ and $t_4$ the time of the two internal vertices, by $p^2$ the internal four momentum of the loop and by $p^1$ and $p^3$ the external four momenta. With this conventions, and in view of the spatial momentum conservation at the internal vertices when the spatial cut-off is removed, we have that the spatial Fourier transform of $A,B,C$ are such that
\begin{align}
	\nonumber
	\hat{A}(t_1,t_2,\mathbf{p}) =
	&\frac{1}{(2\pi)^6} i\mathcal{S}\int_0^\beta du \int dt_3\int dt_4\int dp_0^1dp_0^2dp_0^3
	\chi(t_3)\dot{\chi}(t_4) e^{ ip_0^1 (t_3-t_1)} e^{ ip_0^3 (t_4-t_2)}  e^{ ip_0^2 (t_4-t_3)} \\ \nonumber 
	&\cdot e^{-u ( p_0^3+p_0^2)} 
	\left(\hat{\Delta}_F^\beta(-p_0^1,\mathbf{p})-\hat{\Delta}_+^\beta(-p_0^1,\mathbf{p})\right)(\hat{\Delta}^\beta_+)^2(p_0^2,\mathbf{p})\hat{\Delta}^\beta_+(p_0^3,\mathbf{p})
	\\ 
	\nonumber
	\hat{B}(t_1,t_2,\mathbf{p}) =&\frac{1}{(2\pi)^6} i\mathcal{S}\int_0^\beta du \int dt_3\int dt_4\int dp_0^1dp_0^2dp_0^3   \dot{\chi}(t_3)\chi(t_4) 
	e^{ ip_0^1 (t_3-t_1)} e^{ ip_0^3 (t_4-t_2)}  e^{ ip_0^2 (t_4-t_3)} 
	\\ \nonumber
	&\cdot e^{-u ( p_0^3+p_0^2)}e^{-u ( p_0^1-p_0^2)} 
	\Delta_+^\beta(p_0^1,\mathbf{p})\left((\hat{\Delta}^\beta_F)^2(p_0^2,\mathbf{p})-(\hat{\Delta}^\beta_+)^2(p_0^2,\mathbf{p})\right)\hat{\Delta}^\beta_+(p_0^3,\mathbf{p})
	\\
	\label{eq:B0}
	\hat{C}(t_1,t_2,\mathbf{p}) =&\frac{1}{(2\pi)^6} \mathcal{S}\int_0^\beta du_2\int_0^{u_2} du_1 \int dt_3\int dt_4\int dp_0^1dp_0^2dp_0^3   \dot\chi(t_3)\dot{\chi}(t_4) \\ \nonumber
	&\cdot e^{ ip_0^1 (t_3-t_1)} e^{ ip_0^3 (t_4-t_2)}  e^{ ip_0^2 (t_4-t_3)} e^{-u_1 ( p_0^1-p_0^2)}e^{-u_2 ( p_0^3+p_0^2)} 
	\hat{\Delta}_+^\beta(p_0^1,\mathbf{p})(\hat{\Delta}^\beta_+)^2(p_0^2,\mathbf{p})\hat{\Delta}^\beta_+(p_0^3,\mathbf{p})
\end{align}
where $\mathcal{S}$ is the operator which realizes the symmetrization in $t_1,t_2$ namely $\mathcal{S}f(t_1,t_2)=\frac{1}{2}f(t_1,t_2)+\frac{1}{2}f(t_2,t_1)$.
Let us discuss $C$.
The integrals over $u_i$ and $t_3,t_4$ can be directly taken.
We then obtain
\begin{align*}
\hat{C}(t_1,t_2,\mathbf{p}) &=\frac{1}{(2\pi)^6} \mathcal{S}\int dp_0^1dp_0^2dp_0^3  
e^{ -ip_0^1 t_1} e^{ -ip_0^3 t_2}  
 \hat{\dot\chi}(p_0^1-p_0^2)\hat{\dot{\chi}}(p_0^3+p_0^2) f(p_0^1,p_0^2,p_0^3,\mathbf{p})
\end{align*}
where
\[
f(p_0^1,p_0^2,p_0^3,\mathbf{p})
=
 \left(\frac{1-e^{-\beta ( p_0^3+p_0^2) }}{(p_0^1-p_0^2)(p_0^3+p_0^2)}
-\frac{1-e^{-\beta ( p_0^1+p_0^3)}}{(p_0^1-p_0^2)(p_0^1+p_0^3)}\right)
\hat{\Delta}_+^\beta(p_0^1,\mathbf{p})(\hat{\Delta}^\beta_+)^2(p_0^2,\mathbf{p})\hat{\Delta}^\beta_+(p_0^3,\mathbf{p})\,.
\]

We now consider the limit $t_1+t_2\to\infty$ of $C$. 
To this end, we observe that in view of the form on $\hat{\Delta}^\beta_+$, $p^1=(p_0^1,\mathbf{p})$ and $p^3=(p_0^3,\mathbf{p})$ are forced to be supported on the positive or negative mass shells. 
Since $\hat{\dot\chi}$ is of rapid decrease, after computing the integral over $p_0^2$ and exploiting the form of $(\hat{\Delta}_+^\beta)^2$, we get that when both $(p_0^1,\mathbf{p})$ and $(p_0^3,\mathbf{p})$ are future directed or both are past directed they cannot contribute to $C$, because the corresponding contribution vanishes by Riemann Lebesgue lemma.
Hence, only combinations of future/past directed momenta can contribute to $C$ in the large time limit.
Therefore,
\[
\lim_{t\to\infty} \int d^3\mathbf{p}\;  \hat{C}(t+ \delta t,t-\delta t,\mathbf{p}) = 
\int d^3\mathbf{p} \; \hat{C}_\infty( \delta t,\mathbf{p})\,,
\]
where
\begin{align}
	\hat{C}_\infty( \delta t,\mathbf{p}) &=
	-\frac{1}{(2\pi)^6}\mathcal{S}\int dp_0^2
	 e^{ -i2w\delta t} |\hat{\dot\chi}(w-p_0^2)|^2
	 \left(
	 	\frac{1-e^{-\beta (p_0^2-w) }}{(p_0^2-w)^2}-\frac{\beta}{(p_0^2-w)}
	\right)
	\frac{b(w)^2}{4w^2} e^{-\beta w}
	(\hat{\Delta}^\beta_+)^2(p_0^2,\mathbf{p})
	\nonumber
	\\
	&-\frac{1}{(2\pi)^6}\mathcal{S}\int dp_0^2
	e^{ i2w\delta t} |\hat{\dot\chi}(w+p_0^2)|^2
	\left(
		\frac{1-e^{-\beta (p_0^2+w) }}{(p_0^2+w)^2}-\frac{\beta}{(p_0^2+w)}
	\right)
	\frac{b(w)^2}{4w^2} e^{-\beta w}
	(\hat{\Delta}^\beta_+)^2(p_0^2,\mathbf{p})\,.
	\label{eq:C1}
\end{align}
We now analyse $A$. To this end we notice that the integral over $u$ and $t_4$ can be directly performed.
Before computing the integral over $t_3$, we recall that $\Delta_F^\beta-\Delta_+^\beta = i\Delta_A $ and hence
\[
\int dp e^{-ipt} (\hat{\Delta}_F^\beta(-p,\mathbf{p})-\hat{\Delta}_+^\beta(-p,\mathbf{p}))
= i \theta(t) \int dp e^{-ipt}\hat{\Delta}(p,\mathbf{p})
\]
and hence 
\begin{align*}
\int dt_3 e^{ -ip_0^2 t_3} \chi(t_3) \int dp_0^1 e^{ ip_0^1 (t_3-t_1)} \left(\hat{\Delta}_F^\beta(-p_0^1,\mathbf{p})-\hat{\Delta}_+^\beta(-p_0^1,\mathbf{p})\right) \\
=
i\int dt_3 e^{ -ip_0^2 t_3}\chi(t_3) \theta(t_1-t_3) \int dp_0^1 e^{ ip_0^1 (t_3-t_1)} \hat{\Delta}(p_0^1,\mathbf{p})\\
=
i
\left(\int dp_0^1 \frac{ \chi(t_1)}{i(p_0^1-p_0^2)}e^{ -ip_0^2 t_1} - 
 \frac{\hat{\dot{\chi}}(p_0^1-p_0^2)}{i(p_0^1-p_0^2)}  e^{ -ip_0^1t_1} 
 \right)\hat{\Delta}(p_0^1,\mathbf{p})\,,
\end{align*}
where we have integrated by parts. 
In the limit of large $t_1$,  $\chi(t_1)=1$, furthermore, the contribution proportional to $\chi(t_1)=1$ vanishes because of the Riemann Lebesgue lemma. Actually, $\hat{\Delta}_+^\beta(p_0^3,\mathbf{p})(\hat{\Delta}_+^\beta)^2(p_0^2,\mathbf{p})$ vanishes on the points where $p_0^2+p_0^3=0$. 

Hence, in the limit $t\to+\infty$ we have 
\begin{align}
\hat{A}_\infty(\delta t,\mathbf{p}) 
&=\frac{1}{(2\pi)^6} \mathcal{S}\int dp_0^2  e^{ -i2w \delta t}  |\hat{\dot{\chi}}(p_0^2-w)|^2  \frac{1-e^{-\beta ( p_0^2-w)}}{(p_0^2-w)^2}
\frac{b(w)e^{-\beta w}}{4w^2}
(\hat{\Delta}^\beta_+)^2(p_0^2,\mathbf{p}) 
\notag
\\
& -\frac{1}{(2\pi)^6}\mathcal{S}\int dp_0^2  e^{ i2w \delta t}  |\hat{\dot{\chi}}(p_0^2+w)|^2  \frac{1-e^{-\beta ( p_0^2+w)}}{(p_0^2+w)^2}
\frac{b(w)}{4w^2}
(\hat{\Delta}^\beta_+)^2(p_0^2,\mathbf{p}) 
\label{eq:A1}
\end{align}

\bigskip
To compute $B$ we need to preliminarily analyse 
\begin{align}
{\Delta^\beta_F(x)}^2 -{\Delta^\beta_+(x)}^2 &= {(\Delta_F(x)+W_\beta(x))}^2 -{(\Delta_+(x)+W_\beta(x))}^2
\notag
\\
 &= \Delta_F(x)^2-\Delta_+(x)^2 + 2i \Delta_A(x)W_\beta(x) 
 \notag
 \\
 &= (\Box+a)\int_{4m^2}^\infty dM^2 \frac{\rho_2(M)}{M^2+a} i\Delta_A(x;M)  
 -2i \theta(-t_x) \Delta(x)W_\beta(x) 
\label{eq:loop-ren}
\end{align}
where we have used the K\"allen-Lehmann representation to represent and regularize $\Delta_F(x)^2$ and $\Delta_+(x)^2$, hence $a$ is a parameter which takes into account the renormalization freedom in the definition of $\Delta_F^2$.
Moreover,
\[
\rho_2(M) = \frac{1}{16\pi^2} \sqrt{1-\frac{4m^2}{M^2}}
\]
and $\Delta_A(x;M)$ is the advanced fundamental solution of the Klein Gordon equation with mass $M$.
Outside $x=0$, 
\[
\Xi=(\Box+a)\int_{4m^2}^\infty dM^2 \frac{\rho_2(M)}{M^2+a} i\Delta_A(x;M) = ic\delta(x)-\theta(-t_x) 
\int_{4m^2}^\infty dM^2 {\rho_2(M)} i\Delta(x;M).
\]
where  $c$ is a renormalization constant.
Choosing $c=0$ we have
\begin{align}
\Xi &= -\theta(-t_x)
\int dp_0d\mathbf{p} e^{ip_0t_x}e^{-i\mathbf{p}\cdot \mathbf{x}}\int_{4m^2}^\infty dM^2 {\rho_2(M)} \delta(p_0^2-|\mathbf{p}|^2-M^2) \text{sign}(p_0)
\notag
\\
&= -\theta(-t_x) 
\int dp_0d\mathbf{p}e^{ip_0t_x}e^{-i\mathbf{p}\cdot \mathbf{x}}
\theta(p_0^2-|\mathbf{p}|^2-4m^2) 
 {\rho_2\left(\sqrt{p_0^2-|\mathbf{p}|^2}\right)} \text{sign}(p_0)
\label{eq:T}
\end{align}
The sum of $\Xi$ and $-2i \theta(-t_x) \Delta(x)W_\beta(x) $ can be written as  $-i\theta(-t) Q(t)$ where $Q$ is antisymmetric and it is
\begin{equation}\label{eq:Q}
Q(x) = 
\left(  \int_{4m^2}^\infty dM^2 {\rho_2(M)} \Delta(x;M)        +   2\Delta(x)W_\beta(x)   \right).
\end{equation}
Its Fourier transform can be easily computed, hence when we integrate in $t_3$ and $t_4$ we have to take into account the presence of the Heaviside step function.
However, after simmetryzing in the external points and using the antisymmetry of $Q(t,\mathbf{p})$ for $t\to-t$ we can tame the presence of the Heaviside step functions. Actually on the points where $p_0^3=-p_0^1$ (those which survives under the limit $t\to\infty$) we have that
\begin{gather*}
\int dt_3\int dt_4  \dot{\chi}(t_3)\chi(t_4) 
e^{ i(p_0^3+p_0^2) (t_4-t_3)}  \theta(t_3-t_4) 
-
\int dt_3\int dt_4  \dot{\chi}(t_4)\chi(t_3) 
 e^{ i(p_0^3+p_0^2) (t_4-t_3)}  \theta(t_4-t_3) 
\\ 
= \frac{1}{i(p_0^3+p_0^2)}
-
\frac{|\hat{\dot{\chi}}(p_0^2+p_0^3)|^2}{i(p_0^3+p_0^2)}.
\end{gather*}
With these observations, the limit $t=t_1+t_2\to\infty$ can now be taken in $\hat{B}(t_1,t_2,\mathbf{p})$ given in \eqref{eq:B0}, leading to
\begin{align}
\hat{B}_\infty(\delta t,\mathbf{p})
&= \frac{1}{(2\pi)^3}
\beta \mathcal{S}\int dp_0^2 e^{ -i2w \delta t}  
\left(
\frac{1}{(p_0^2-w)}
-
\frac{|\hat{\dot{\chi}}(p_0^2-w)|^2}{(p_0^2-w)} 
\right)
\frac{b(w)^2e^{-\beta w}}{4w^2}
\hat{Q}(p_0^2,\mathbf{p})
\notag
\\
&+\frac{1}{(2\pi)^3} \beta \mathcal{S}\int dp_0^2 e^{ i2w \delta t}  
\left(
\frac{1}{(p_0^2+w)}
-
\frac{|\hat{\dot{\chi}}(p_0^2+w)|^2}{p_0^2+w} 
\right)
\frac{b(w)^2e^{-\beta w}}{4w^2}
\hat{Q}(p_0^2,\mathbf{p})
\label{eq:B1}
\end{align}
The contribution of the renormalization freedom is given by
\begin{align*}
	\hat{B}^c(t_1,t_2,\mathbf{p}) =\frac{1}{(2\pi)^3}
	c2\mathcal{S}\int_0^\beta du \int dt_3\int dp_0^1dp_0^3   \dot{\chi}(t_3)\chi(t_3) e^{ ip_0^1 (t_3-t_1)} e^{ ip_0^3 (t_3-t_2)}   e^{-u ( p_0^3+p_0^1)} 
	\hat{\Delta}_+^\beta(p_0^1,\mathbf{p})\hat{\Delta}^\beta_+(p_0^3,\mathbf{p})\,,
\end{align*}
so that in the limit $t\to\infty$ we have
\begin{align}\label{eq:B1c}
\hat{B}^c_{\infty}(\delta t,\mathbf{p}) = \frac{1}{(2\pi)^3}c \beta  \frac{b(w)^2}{4w^2}e^{-\beta w}.
\end{align}
We observe that $c$ cannot depend on $\beta$.

\bigskip
We thus have that the contribution to $F^{[2]}$ which survives the limit $t\to\infty$ is
\[
\hat{F}^{[2]}_\infty(\delta t,\mathbf{p}) = \hat{A}_\infty(\delta t,\mathbf{p}) +\hat{B}_\infty(\delta t,\mathbf{p}) +\hat{C}_\infty(\delta t,\mathbf{p}) +\hat{B}^c_\infty(\delta t,\mathbf{p}) 
\]
where $\hat{A}_\infty$, $\hat{B}_\infty$, $\hat{C}_\infty$, $\hat{B}^c_\infty$, are respectively as in \eqref{eq:A1}, \eqref{eq:B1}, \eqref{eq:C1}, \eqref{eq:B1c}.  In order to proceed with our analysis we have to evaluate $(\hat{\Delta}_+^\beta)^2(p_0^2,\mathbf{p})$ in $\hat{A}_\infty$ and $\hat{C}_\infty$ and 
$\hat{Q}(p_0^2,\mathbf{p})$ in $\hat{B}_\infty$. 

\bigskip
At this point, we would like to estimate $\hat{F}^{[2]}_\infty$ for $\delta t=0$ and for $\mathbf{p}=0$. We start from the analysis of the $B$ contribution.
We thus need an expression for $\hat{Q}(p_0,\mathbf{p})$ to be inserted in \eqref{eq:B1}.
We decompose $Q$ given in \eqref{eq:Q} as $Q=Y+U$,  
where 
\[
U(x) = 2\Delta(x)W_\beta(x)  
\]
and its Fourier transform is such that
\begin{gather*}
\hat{U}(p_0,\mathbf{p}) = \frac{1}{(2\pi)^6}2\int_{\mathbb{R}^3} d\mathbf{q}_1\int_{\mathbb{R}^3}d\mathbf{q}_2 \frac{b(w_2)}{4w_1w_2} \delta(\mathbf{p}-\mathbf{q}_1-\mathbf{q}_2) \\\
e^{-\beta w_2}\left( \delta(p_0-w_1-w_2)-\delta(p_0+w_1-w_2)
+\delta(p_0-w_1+w_2)-\delta(p_0+w_1+w_2) \right)
\end{gather*}
we observe that this is an odd function of $p_0^2$.
At the same time, recalling that the expression of $Y$ is similar to  that of $\Xi$  in \eqref{eq:T} where the Heaviside step function is removed, we have that the Fourier transform of $Y$ is 
\begin{align*}
\hat{Y}(p_0,\mathbf{p}) = -
\theta(p_0^2-|\mathbf{p}|^2-4m^2) 
 {\rho_2(p_0^2-|\mathbf{p}|^2)} \text{sign}(p_0)
\end{align*}
which is an odd function of $p_0$. Hence 
\[
\hat{Q}(p_0,\mathbf{p}) = \hat{Y}(p_0,\mathbf{p})+\hat{U}(p_0,\mathbf{p})
\]
is an odd function of ${p}_0$. 
This implies that $\hat{B}_\infty(\delta t,\mathbf{p}) = 0$ 
because, after symmetrization, the integrand in $p_0^2$ present in 
\eqref{eq:B1}
 is a multiplication of an even function with and an odd function of $p_0^2$, therefore the integral vanishes.
The contribution of the graph $B$ in the limit $t\to\infty$ is thus made by the renormalization freedom \eqref{eq:B1c} only. 

\bigskip
To analyse the contribution of $A$ and $C$ at $\delta t=0$ and $\mathbf{p}=0$ we need $(\hat{\Delta}^\beta_+)^2(p_0,\mathbf{p})$. 
Notice that (up to a constant)
\begin{gather*}
(\hat{\Delta}^\beta_+)^2(p_0,\mathbf{p}) = \frac{1}{(2\pi)^6}\int_{\mathbb{R}^3} d\mathbf{q}_1\int_{\mathbb{R}^3}d\mathbf{q}_2 \frac{b(w_1)b(w_2)}{4w_1w_2} \delta(\mathbf{p}-\mathbf{q}_1-\mathbf{q}_2) \\\
\left( \delta(p_0-w_1-w_2)+\delta(p_0+w_1-w_2)e^{-\beta w_1}+\delta(p_0-w_1+w_2)e^{-\beta w_2}+\delta(p_0+w_1+w_2)e^{-\beta (w_1+w_2)} \right)
\end{gather*}
where $w_i=\sqrt{\mathbf{q}_i^2+m^2}$. 
Hence, after substituting the previous expression in \eqref{eq:A1} and in \eqref{eq:C1}, setting $\mathbf{p}=0$ we have that the integral over $\mathbf{q}_2$ can be taken and it forces $w_2=w_1=m$. Using ordinary spherical coordinates centred in the origin for $\mathbf{q}_1$ we see that 
the integral over the angular coordinates can be taken because at $\mathbf{p}=0$ the function which is integrated is spherically symmetric. 
The integral over the radial coordinates can be reparametrized with $M=2\sqrt{|\mathbf{q}_1|^2+m^2}$.
Proceeding in this way we obtain $\hat{A}_\infty(0,0)+\hat{C}_\infty(0,0)$ and recalling \eqref{eq:B1c} we obtain
\begin{align*}
\hat{F}^{[2]}_\infty(0,0) 
=& 
\frac{1}{(2\pi)^6}c  \beta  \frac{b(m)^2}{4m^2}e^{-\beta m} 
+
\frac{1}{(2\pi)^6}\frac{\pi}{8} 
\frac{e^{-\beta m}}{m^2} 
{b(m)}
\int_{2m}^\infty dM  \sqrt{1-\frac{4m^2}{M^2}}b\left(\frac{M}{2}\right)^2
\notag
\\
&\cdot\left[
\beta b(m)(1-e^{-\beta M})
\left(\frac{|\hat{\dot\chi}(M-m)|^2}{M-m}+\frac{|\hat{\dot\chi}(M+m)|^2}{M+m}\right)
\right.
\notag
\\
&\left.
+(e^{-\beta m}-e^{-\beta M}) 
\left(\frac{|\hat{\dot\chi}(M-m)|^2}{(M-m)^2}
\right)
+(e^{-\beta (M+m)}-1) 
\left(\frac{|\hat{\dot\chi}(M+m)|^2}{(M+m)^2}
\right)
\notag
\right.
\\
&\left.
-2e^{-\beta \frac{M}{2}}\frac{|\hat{\dot\chi}(m)|^2}{m^2}
  \left(
  \frac{1}{b(m)}
\right)
\right]
\end{align*}
From this analysis and from the expression of $\hat{F}^{[2]}_\infty(\delta t,0)$ we can draw these conclusions:
\begin{enumerate}
\item In the limit $\beta\to\infty$ $F_\infty^{[2]}$ vanishes. 
\item $F$ depends on $\hat{\dot{\chi}}$, more precisely in $\hat{F}_\infty^{[2]}$ there are two contributions, one which depends on $\hat{\dot{\chi}}$ and another one which depends on the choice of the renormalization constant $c$. 
In particular, the renormalization constant can be chosen in such a way that $c = 0$.

\item Notice that in the limit $\chi(t)\to\theta(t)$, $\hat{\dot{\chi}}$ tends to a constant and the integral in $dM$ diverges.

\item We analyse now the limit where $\chi\to1$. 
To this end, consider a smooth $\chi$ with $\chi(t)=1$ for $t\geq 0$ and $\chi(t)=0$ for $t<1$. Consider now the family of time cut off functions $\chi_n(t)=\chi\left(\frac{t}{n}\right)$. In the limit for large $n$, ${\chi}_n$ tends to the constant, while its derivative  $\dot{\chi}_n$ tends to $0$. 
At the same time its Fourier transform is $\hat{\dot{\chi}}_n(p) = \hat{\dot{\chi}}( n p)$. Hence in the limit $n\to\infty$, $\hat{\dot{\chi}}_n(p)$ vanishes for $p\neq0$ because $\hat{\dot{\chi}}$ is a rapidly decreasing function. 
This implies that in the limit $n\to\infty$ $\hat{F}^{[2]}_\infty(0,0)$ vanishes if $c=0$.

\end{enumerate}

In particular, while $\omega^{\beta, V}$ does not depend on $\chi$ because of \cite{FredenhagenLindner}, the analysis presented here shows that the real time formalism -- which corresponds to considering $\omega^\beta$ and ignoring $U(i\beta)$ -- is very sensitive to the way in which the cut-off function is chosen. 
We also notice that even if the limit $n\to\infty$ is taken the analysis done in the real time formalism differs by the one done with  $\omega^{\beta, V}$ by the choice of the renormalization freedom. 
However the principle of general covariance \cite{BFV03, HW01} requires that the renormalization constant $c$ is not state dependent hence it cannot depend on $\beta$, and actually $c$ can be  fixed in the vacuum theory.
Therefore, whenever $c\neq0$, we see a difference between the computation done in the real time formalism with the one in the full equilibrium state   $\omega^{\beta, V}$.

\subsection*{Acknowledgments} 
The work of J. Braga de G\'oes Vasconcellos is supported in part by the National Group of Mathematical Physics (GNFM - INdAM).

\end{document}